\newif\ifdraft\draftfalse   
\newif\ifanon\anonfalse      
\newif\ifcamera\cameratrue 
\newif\iflongrefs\longrefsfalse 
\newif\ifsooner\soonerfalse 
\newif\iflater\laterfalse   
\newif\iffull\fullfalse   
\newif\ifneedspace\needspacetrue 
\newif\ifhighlightnewtext\highlightnewtextfalse 
\newif\ifallcites\allcitesfalse 
\newif\ifbackref\backreffalse 
\makeatletter \@input{texdirectives.tex} \makeatother
\definecolor{dkblue}{rgb}{0,0.1,0.5}
\definecolor{dkgreen}{rgb}{0,0.4,0}
\definecolor{dkred}{rgb}{0.6,0,0}
\definecolor{dkpurple}{rgb}{0.7,0,1.0}
\definecolor{purple}{rgb}{0.9,0,1.0}
\definecolor{olive}{rgb}{0.4, 0.4, 0.0}
\definecolor{teal}{rgb}{0.0,0.4,0.4}
\definecolor{azure}{rgb}{0.0, 0.5, 1.0}
\definecolor{gray}{rgb}{0.5, 0.5, 0.5}
\definecolor{dkgray}{rgb}{0.3, 0.3, 0.3}
\newcommand{\comm}[3]{\ifdraft{{\color{#1}[#2: #3]}}\fi}
\newcommand{\ch}[1]{\comm{teal}{CH}{#1}}
\newcommand{\tw}[1]{\comm{purple}{TW}{#1}}
\newcommand{\ca}[1]{\comm{dkblue}{CA}{#1}}
\newcommand{\er}[1]{\comm{blue}{ER}{#1}}
\newcommand{\gm}[1]{\comm{azure}{GM}{#1}}
\newcommand{\et}[1]{\comm{dkred}{ET}{#1}}
\newcommand{\meta}[1]{\ifdraft{{\color{dkgray}[#1]}}\fi}
\newcommand{\remove}[1]{\ifdraft\sout{#1}\fi}
\newcommand*{\EG}{e.g.,\xspace}
\newcommand*{\IE}{i.e.,\xspace}
\newcommand\fstar{F$^{\star}$\xspace}
\newcommand\sciostar{SCIO$^{\star}$\xspace}
\newcommand{\cmparrow}{\hspace{-0.35em}\downarrow}
\newcommand{\cmp}[1]{#1\cmparrow}
\newcommand{\bakarrow}{\hspace{-0.35em}\uparrow}
\newcommand{\bak}[1]{#1\bakarrow}
 \definecolor{light-gray}{gray}{0.95}
\newcommand{\ls}[1]{\lstinline{#1}}
\newcommand{\lss}[1]{\lstinline[basicstyle=\footnotesize]{#1}}
\newlist{inlist}{enumerate*}{1}
\setlist[inlist]{label=(\arabic*)}
\tikzset{%
  block/.style    = {draw, thick, rectangle, minimum height = 3em,
    minimum width = 3em},
}
\def\Snospace~{\S{}}
\newcommand{\citeFull}[2]{\ifallcites\cite{#1,#2}\else\cite{#1}\fi}
\renewcommand{\paragraph}[1]{\ifneedspace\else\smallskip\fi{\bf #1}\;}
\newcommand{\newtext}[1]{\ifhighlightnewtext{\color{dkgreen}#1}\else#1\fi}
\newcommand{\newremove}[1]{\ifhighlightnewtext{\color{dkred}\sout{#1}}\fi}
\begin{document}

\hypersetup{
pdftitle={Securing Verified IO Programs Against Unverified Code in F*},
}

\title{Securing Verified IO Programs Against Unverified Code in \fstar}

\ifanon
\author{}
\else
\author{Cezar-Constantin Andrici}
  \authornote{First author.} 
  \affiliation{\institution{MPI-SP}\city{Bochum}\country{Germany}}
  \email{cezar.andrici@mpi-sp.org}
  \orcid{0009-0002-7525-2440}
\author{\cb{S}tefan Ciob\^{a}c\u{a}}
  \affiliation{\institution{Alexandru Ioan Cuza University}\city{Ia\cb{s}i}\country{Romania}}
  \email{stefan.ciobaca@gmail.com}
  \orcid{0009-0000-4082-570X}
\author{C\u{a}t\u{a}lin Hri\cb{t}cu}
  \affiliation{\institution{MPI-SP}\city{Bochum}\country{Germany}}
  \email{catalin.hritcu@mpi-sp.org}
  \orcid{0000-0001-8919-8081}
\author{Guido Mart\'{i}nez}
  \affiliation{\institution{Microsoft Research}\city{Redmond}\state{WA}\country{USA}}
  \email{guimartinez@microsoft.com}
  \orcid{0009-0005-5831-9991}
\author{Exequiel Rivas}
  \affiliation{\institution{Tallinn University of Technology}\ifcamera\city{Tallinn}\fi\country{Estonia}}
  \email{exequiel.rivas@ttu.ee}
  \orcid{0000-0002-2114-624X}
\author{\'{E}ric Tanter}
  \affiliation{\institution{University of Chile}
  \department{Computer Science Department}
  \city{Santiago}\country{Chile}}
  \email{etanter@dcc.uchile.cl}
  \orcid{0000-0002-7359-890X}
\author{Théo Winterhalter}
  \affiliation{\institution{Inria Saclay}\ifcamera\city{Saclay}\fi\country{France}}
  \email{theo.winterhalter@inria.fr}
  \orcid{0000-0002-9881-3696}
\fi

\ifanon\else
\renewcommand{\shortauthors}{Andrici et al.}
\fi

\begin{abstract}
We introduce \sciostar{}, a formally secure compilation framework for statically
verified programs performing input-output (IO).
The source language is an \fstar{} subset in which a verified
program interacts with its IO-performing context via a higher-order interface
that includes refinement types as well as pre- and post-conditions about past IO events.
The target language is a smaller \fstar{} subset
in which the compiled program is linked with an adversarial context that has an interface
without refinement types, pre-conditions, or concrete post-conditions.
To bridge this interface gap and make compilation and linking secure we propose
a formally verified combination of higher-order contracts and reference
monitoring for recording and controlling IO operations.
Compilation uses contracts to convert the logical assumptions the
program makes about the context into dynamic checks on each context-program
boundary crossing.
These boundary checks can depend on information about past IO events stored in
the state of the monitor. But these checks cannot stop the
adversarial target context {\em before} it performs dangerous IO operations.
Therefore linking in \sciostar{} additionally forces the context to perform all
IO actions via a secure IO library, which uses reference monitoring to dynamically
enforce an access control policy before each IO operation.
We prove in \fstar{} that \sciostar{} soundly enforces a global trace property
for the compiled verified program linked with the untrusted context.
Moreover, we prove in \fstar{} that \sciostar{}
satisfies by construction Robust Relational Hyperproperty Preservation,
a very strong secure compilation criterion.
Finally, we illustrate \sciostar{} at work on a simple web server example.

\iflater
\meta{WIP: we use parametricity to show formally in \fstar{} that this gives us a form of noninterference}.
\fi
\end{abstract}

\begin{CCSXML}
  <ccs2012>
     <concept>
         <concept_id>10011007.10010940.10010992.10010998.10010999</concept_id>
         <concept_desc>Software and its engineering~Software verification</concept_desc>
         <concept_significance>500</concept_significance>
         </concept>
     <concept>
         <concept_id>10011007.10011006.10011041</concept_id>
         <concept_desc>Software and its engineering~Compilers</concept_desc>
         <concept_significance>500</concept_significance>
         </concept>
   </ccs2012>
\end{CCSXML}
  
\ccsdesc[500]{Software and its engineering~Software verification}
\ccsdesc[500]{Software and its engineering~Compilers}

\keywords{secure compilation, formal verification, proof assistants, input-output}


\maketitle

\section{Introduction}
\label{sec:intro}

Increasingly realistic programs have been written and verified in
proof-oriented programming languages like Coq\ifallcites~\cite{coq}\fi,
Isabelle/HOL\ifallcites~\cite{NipkowPW02}\fi,
Dafny\ifallcites~\cite{leino10dafny}\fi,
and \fstar{}\ifallcites~\cite{mumon}\fi{}
for obtaining strong static guarantees of correctness and
security~\citeFull{compcert,certikos,record,haclstar,evercrypt,everparse,DBLP:conf/cpp/ArasuRRSFHPR23,
  linear-dafny,veribetrkv,BhargavanB0HKSW21, HoPBB22,
  ZakowskiBYZZZ21, Appel16, MurrayMBGBSLGK13, KleinAEHCDEEKNSTW10}{steel,vale,galapagos}.
%
One way to speed up the development process in such languages is to use existing
unverified libraries written in more mainstream languages.
%
In such cases, the program is usually verified under some {\em assumptions}
about the libraries,
%
and then the verified program is compiled and linked with the unverified libraries.
The assumptions about the linked libraries can be expressed
explicitly, as specifications~\cite{CokL20}, or implicitly---\IE assuming that the
libraries respect the abstractions of the language of the verified
program and \EG do not have undesirable side effects. In either case,
these assumptions are necessary for the verification of any global
property characterizing the behavior of the program together with the
libraries.
%
%
However, current compilers from proof-oriented languages (\EG from Coq and \fstar{}
to OCaml~\cite{Let2008,SozeauBFTW20} or C~\cite{lowstar,certicoq17,Paraskevopoulou21},
or from Dafny to C\#\iffull~\cite{dafny-doc}\fi,
or from \fstar{} and Dafny to Assembly~\cite{vale, valefstar}) 
take for granted that the linked unverified code cannot break the
assumptions on which the verified program relies. This is unsound and thus provides no
global guarantees after compilation and linking.
More importantly, this is insecure if---as we assume in this paper---the linked
code is {\em untrusted} (\EG vulnerable, compromised, or malicious).


\iflater
\ch{[Note to self, not actionable yet]
  I feel that a few babies were thrown out with the bathwater in the text
  commented out below: we were illustrating that in our higher-order setting it
  is more difficult to minimize/eliminate assumptions about the context compared
  to the first-order setting, especially when the program is the library. Now
  that's gone, which simplifies indeed the story, but also weaken the problem. Could we
  maybe return to this point when we introduce the program is library setting below?}
\ch{That got replaced with a focus on global properties; and it's indeed the case
  that even first-order libraries can't obtain global guarantees. So probably fine?}
\fi

To ensure security, one could statically verify the linked code (\EG library) to
prove that it cannot be compromised and all the assumptions we make about it
  hold~\citeFull{Melocoton, SammlerSSDKGD23, IrisWasm, AhmedMWA22}{Bowman21,
  Bowman18, KoronkevichRAB22, SchererNRA18},
but this would likely require much effort,
as static verification most often involves user interaction and expertise,
thus taking away the simplicity of just using the unverified code.
%
The alternative is to enforce the assumptions at runtime by converting them into dynamic
checks~\citeFull{evercrypt, haclstar}{libcrux}.
In this paper we assume that one is willing to (re)write the linked code in the
proof-oriented language, so that all assumptions are explicit. However, to
speed up development, one wants to enforce the explicit assumptions
about linked unverified code dynamically instead of proving them statically.

In this paper we introduce \sciostar, which provides
a {\em formally verified} way to systematically convert into
dynamic checks all the assumptions made by verified \fstar{}~\ifallcites\else\cite{mumon} \fi programs with IO
about linked unverified \fstar{} code with IO---\EG reading and writing files and
network sockets.\iflater\ch{could add (here and/or in contributions or in key ideas?) that this could also be
  instrumentation (output) events, but first we would need an example showing that}\fi{}
We look at this problem through the lens of secure
compilation~\cite{AbateBGHPT19, MarcosSurvey}: {\bf \sciostar{} is
a formally secure compilation framework for IO programs}
consisting of a compiler and a linker
between two languages shallowly embedded in \fstar{}.



\begin{figure}
\begin{tikzpicture}[auto, thick, node distance=5.5cm]
\draw
    node at (0, 0) {}
    node [block] (vps) {\( \small \begin{array}{cc}\textit{verified}\\\textit{program}\end{array} \)}
    node [below of=vps, node distance=0.8cm, align=center] (vps1) {\tiny strong\\[-1.5ex]\tiny interface}
    node [block, right of =vps] (vpi) {\( \small \begin{array}{cc}\textit{compiled}\\\textit{program}\end{array} \)}
    node [below of=vpi, node distance=0.8cm, align=center] (vpi1) {\tiny intermediate\\[-1.5ex]\tiny interface}
    node [block, right of =vpi] (upt) {\( \small \begin{array}{cc}\textit{unverified}\\\textit{context}\end{array} \)}
    node [below of=upt, node distance=0.8cm, align=center] (upt1) {\tiny weak\\[-1.5ex]\tiny interface}
    ;
\draw[-] [thin] (1.5, 0.75) -- (5.5, 0.75);
\draw[-] [thin] (7.75, 0.75) -- (11.75, 0.75);
\draw node at (6.625, 0.75) {\tiny target language};
\draw node at (0.0, 0.75) {\tiny source language};
\draw[->](vps) -- node [above] { \small {\bf compile} } node [below,align=center] { \small add higher-\\[-1ex]\small order contracts } (vpi);
\draw[-](vpi) -- node [above] { \small {\bf link} } node [below,align=center] { \small add reference\\[-1ex]\small monitor } (upt);
\draw[-] [color=gray,thick, dashed] (1.25,-0.75) -- 
(1.25,0.75);
\end{tikzpicture}
\caption{\label{fig:overview}An overview of \sciostar.\vspace{-1em}}
\end{figure}

The {\em source language} is an \fstar{} subset in which a verified
{\em partial source program} interacts with 
a {\em source context} via a {\em strong higher-order interface}, consisting of
specifications expressed as refinement types and pre- and post-conditions.
Refinement types are used to constrain the values of a base type
with a logical formula (\EG the value of an integer is larger than zero).
%
%
The pre- and post-conditions of a
function can depend on its arguments, can specify its result, and
can also specify its IO behavior by considering
the trace of past IO events---each
time an IO operation is performed, an event
containing the arguments and the result is appended to the trace.
A pre-condition can constrain the trace of past IO events at the time the
function is called (\EG it could require that a file is currently open by
looking back in the trace for an open event for the corresponding file
descriptor and making sure that no close event happened afterwards for this descriptor).
A post-condition can additionally take into account the IO events produced by
the function itself when it returns (\EG the function closed all file
descriptors that it opened) and can also specify the relation between the result
value and the return-time trace (\EG the returned value was read from a file).

The {\em target language} is a smaller \fstar{} subset
in which the compiled program
is linked with an adversarial {\em target context}
that has a {\em weak interface} without refinement types, pre-conditions, or
concrete post-conditions
(see \autoref{fig:overview}).
The partial source program and the target context can only interoperate
securely when their interfaces match, which is not the case in our setting because
the strong interface of the verified program contains concrete specifications expressed
as refinement types and pre- and post-conditions,
while the weak interface of the context does not.
%

\sciostar{} enables a verified partial source program to be securely compiled
and linked against an arbitrary target context.
\sciostar{} supports both the case in which the untrusted context is a library used by the program and the case in which the program is a library used by the untrusted context.
To achieve this secure interoperability \sciostar{} uses a combination of
\begin{inlist}
  \item\label{itm:ref-mon} \emph{reference monitoring}~\citeFull{Anderson73}{UNIX, Ames81}, introduced 
  by the linker, and
  \item\label{itm:hoc} \emph{higher-order contracts}~\citeFull{FindlerF02}{DisneyFM11, ScholliersTM15, MooreDFFC16, TovP10}, introduced by the compiler.
\end{inlist}
These techniques have, as far as we know, not been applied so far to our setting,
where we are
protecting statically verified code from unverified code, and where we aim for
strong formal security guarantees.
We use reference monitoring and higher-order contracts to bridge the gap between the strong interface of the
source program and the weak interface of the untrusted target context by giving the
compiled program an {\em intermediate interface} in between (see \autoref{fig:overview}).

\ref{itm:ref-mon} The reference monitor records in its state information about the
trace of IO events that happened so far during the execution.
The monitor uses this information to enforce an {\em access control policy}
by performing a dynamic check before each IO operation of the context
(\EG preventing access to a
file descriptor associated with a password file or network socket).
\sciostar{} implements this by linking the target context with a secure
IO library that dynamically enforces the policy.
This policy enforcement on each IO operation of the context is necessary, because
checking some post-conditions when the context returns would be too late to
prevent bad IO events from happening.
For example, a post-condition of the context could state that it has neither
accessed the passwords file, nor the network. If we only detect a violation of
this policy after the context returns, the damage has already been done, with
passwords leaked over the network. \newtext{Instead, we use the monitor to prevent
illegal IO events from actually happening, and the trace is then unaffected
as we do not record such prevented events.}
However, reference monitoring is not enough on its own,
since the strong interface contains refinement types and
some pre- and post-conditions that cannot be enforced at the level of IO
operations, but have to be enforced on the higher-order boundary between the program
and the context (\EG the pre-condition of a callback sent to the context
can require that its argument is an open file descriptor).

\ref{itm:hoc} Higher-order contracts are used by \sciostar{} during
compilation to convert
the assumptions on the boundary between the partial program and the context into
dynamic checks.
The \sciostar{} compiler wraps the program in new functions with weaker types
and adds dynamic checks
before each function call and after each function return crossing the boundary
between program and context.
This dynamically enforces refinement types on arguments and results,
as well as pre- and some post-conditions of functions.
To enforce pre- and post-conditions related to IO behavior,
the higher-order contracts access the state of the monitor to get information about
the IO events that happened before the function was called and during the
execution of the function.

\iflater
\ch{We say ``need'' here, but an interesting thought experiment is what exactly
  would break in our proofs if we allowed the untrusted target context
  unrestricted access. For the IO operations linking would break, since we
  wouldn't be able to strengthen the interface of the context?  And at a less
  syntactic level, soundness would break, right? What about GetMState in the
  target though? We didn't yet prove noninterference, and even that was IIRC for
  source, not target contexts?}\ch{Maybe it's too early here (unless we find a
  very intuitive explanation), but this is the kind of insights the key ideas
  section could/should convey!}
\fi

We are the first to secure verified IO programs against unverified code,
  and moreover we provide strong machine-checked security guarantees.
For this we make the following {\bf contributions}:


\begin{itemize}[leftmargin=12pt,nosep,label=$\blacktriangleright$]

\item We introduce \sciostar{}, an \fstar{} framework for securely compiling a
verified partial IO program and linking it against an IO context with a
weak higher-order interface.
We bridge the interface gap between the verified program and the untrusted context
by using a combination of higher-order contracts and a reference monitor,
which share state that records information about prior IO events.
%
%
Moreover, \sciostar{} takes advantage of the program being statically
verified and performs no dynamic checks when the program performs IO or when it
passes control to the context.

\item
We statically verify that \sciostar{} adds enough dynamic
checks to guarantee that the program and the context can interoperate securely.
This verification includes the implementation of higher-order contracts, the
wrapping of the context's IO library with the monitor's access control checks,
as well as their combination.
%
%
Moreover, we prove in \fstar{} the soundness of the entire \sciostar{} framework,
showing that a global trace property holds for the compiled
program linked with the untrusted context.
This proof is done modularly by typing and also heavily benefits from \fstar{}'s SMT
automation\ifallcites~\cite{dm4all,mumon,metafstar}\fi.

\item 
We represent computations
in our shallowly embedded source and target languages by using a
new monadic effect \ls{MIO}---\IE Monitored IO---that
is at its core a way to statically verify terminating IO programs,
engineered to take advantage of SMT automation in \fstar{}.
In addition to usual IO operations
(\EG reading and writing files and network sockets\iflater\ch{maybe also instrumentation
  (output) events?  see comment from paragraph 3; especially if we add that there an
  alternative is to drop the paren here}\fi),
\ls{MIO} contains a \ls{get_mstate} operation,
which models in a simple abstract way access to the reference monitor state,
%
and which allows us to implement the dynamic checks done by
the monitor and higher-order contracts.
In addition, at the specification level we distinguish between events produced
by the program and those produced by the context, which enables enforcing a
stronger specification on the untrusted context.

\item To model in a simple way that our \newtext{shallowly embedded}
contexts cannot directly access the IO
operations and the monitor's internal state we propose a novel use of
flag-based effect polymorphism.
The flag is an index of the \ls{MIO} monad that controls which operations a
computation can access.
Flag-based effect polymorphism is necessary since \fstar{} gives up on more general kinds of effect
polymorphism 
in order to gain from SMT automation.
\newtext{In addition to this shallow embedding of contexts, we introduce a
syntactic representation of target contexts in a small deeply embedded
language that can be translated into the shallow embedding.}
\iflater
\meta{WIP: To formalize that a flag-based effect polymorphic context can't directly
call the IO actions + GetMState we use parametricity(?) to prove in \fstar{} a
noninterference property showing that the context cannot obtain more information
about the trace than what can be obtained by calling IO operations that the
access control policy allows. This noninterference property moreover takes into
account that the success or failure of the dynamic checks of our reference
monitor and higher-order contracts necessarily reveal some information about the
trace to the context.}
\fi

\item We show that \sciostar{} is secure by providing a machine-checked proof in \fstar{}
that it satisfies Robust Relational Hyperproperty
Preservation (RrHP), which is the strongest secure compilation criterion of
\citet{AbateBGHPT19}, and in particular stronger than
full abstraction\iffull~\cite{MarcosSurvey,AbateBGHPT19}\fi.
Intuitively this ensures that \sciostar{} provides enough protection to the
compiled program so that linked target contexts do not enjoy more attack power
than a source context would have against the original source program.
%
While proofs of such 
criteria are generally challenging~\cite{MarcosSurvey, DevriesePPK17, AbateBGHPT19, NewBA16, JacobsDT22},
we have carefully set things up in \sciostar{}
so that our proof is simple by construction, in particular because:
\begin{inlist}
  \item our languages are shallowly embedded in \fstar{};
  \item we used flag-based effect polymorphism to model the context;
  \item we designed our higher-order contracts mechanism so that we can define
  both compilation and back-translation in a way that they
  satisfy a syntactic inversion law that
  immediately implies RrHP (\IE compiling the program and linking it with the
  context is syntactically equal to back-translating the context and linking it
  with the program).
\end{inlist}
%
We prove RrHP 
for both the case in which the
context is a library used by the program and the case in which the
program is a library used by the context.
%

\item We illustrate \sciostar{}
on a main case study in which we statically verify a simple web server in \fstar{},
compile it to our target language, 
and 
link it against some adversarial request handlers and
a non-adversarial request handler serving files.
This illustrates that our languages are expressive enough to write
interesting code and that the \ls{MIO} monadic effect enables effective verification.



\end{itemize}

\paragraph{Outline.}
We start by illustrating the key ideas of
\sciostar{} on the verified web server case study
(\autoref{sec:key-ideas}).
%
We then present the \ls{MIO} effect (\autoref{sec:mio}) and
the implementation of higher-order contracts (\autoref{sec:contracts}).
%
We put these pieces together to define 
\sciostar{} 
and prove it soundly enforces a global safety property and
it satisfies RrHP (\autoref{sec:secure-compilation}).
\newtext{
Next, we explain
how we execute the web server in OCaml (\autoref{sec:running-case-study})
and illustrate a few more examples (\autoref{sec:other-examples}).
}%
Finally, we discuss related
(\autoref{sec:related-work}) and future work (\autoref{sec:conclusion}).

\newtext{
\paragraph{Longer-term goal.}
Achieving strong secure compilation criteria such as RrHP is extremely
difficult, and no realistic compilation chain that achieves such criteria has
ever been built~\cite{MarcosSurvey,AbateBGHPT19}.
We see our work as an important step towards building a formally secure
compilation chain from \fstar{} to a safe subset of OCaml.
This is a formidable research challenge though, so we carefully limited the
scope of the current paper by focusing on IO as the single side effect and by
assuming for now that the unverified context is also written in a shallowly
embedded subset of \fstar{}, but only respects a weak interface.
%
In \autoref{sec:conclusion} we discuss how these assumptions could be lifted in
the future to achieve formally secure compilation from \fstar{} to OCaml.
}

\section{\sciostar{} in Action}
\label{sec:key-ideas}

\begin{figure}[h]
\vspace{-1em}
\begin{mdframed}[backgroundcolor=black!5,hidealllines=true]
\begin{lstlisting}[numbers=left,escapechar=\$]
type req_handler =
  (client:file_descr) ->
  (req:buffer{valid_http_request req}) -> $\label{line:req_ref}$
  (send:(res:buffer{valid_http_response res} -> MIO (either unit err) $\label{line:send_start}$
                                            (requires (fun h -> did_not_respond h))
                                            (ensures (fun _ _ lt -> exists r. lt = [EWrite _ client r])))) -> $\label{line:send_stop}$
  MIO (either unit err) (requires (fun h -> did_not_respond h)) $\label{line:hndlr_pre}$
                      (ensures (fun h r lt -> (wrote_to client lt \/ Inr? r) /\ $\label{line:hndlr_post_wrote}$
                                         handler_only_opens_and_reads_files_from_folder lt /\ $\label{line:hndlr_post_only}$
                                         web_server_only_writes lt) $\label{line:hndlr_post_web}$
let web_server (handler:req_handler) :
    MIO unit (requires (fun h -> True)) (ensures (fun _ _ lt -> every_request_gets_a_response lt) = $\label{line:srvr_spec}$
  let s = socket () in setsockopt s SO_REUSEADDR true; bind s "0.0.0.0" 3000; listen s 5; ... $\label{line:srvr_init}$
  let client = select_client s in $\label{line:srvr_select}$
  let req = get_req client in $\label{line:srvr_get_req}$
  if Inr? (handler client req (write client)) then sendError 400 client; $\label{line:srvr_handle}$
  close client; ...
\end{lstlisting}
\end{mdframed}
\caption{\sciostar{} case study: web server that takes a request handler as argument. The types are simplified.}
\label{fig:running_example}
\end{figure}

\ifsooner

\ch{Also I think it would be good to more clearly identify in the text what the
  presented key ideas are. What's exciting, novel, cool about what we're
  presenting here. Sometimes the text seems to focus on boring standard stuff
  (2.5) or gory technical details (2.6), without explaining first what the main
  point is. Why should one pay attention in the first place?}


\fi

\noindent
We illustrate 
the key ideas of \sciostar{} using our main case study as a running example:\footnote{
\fstar syntax is similar to OCaml (\lss{val}, \lss{let}, \lss{match}, etc).
Binding occurrences \lss{b} take the form \lss{x:t}
or \lss{#x:t} for an implicit argument. We omit the type in a binding when it can be inferred. 
Lambda abstractions are written
\lss{fun b_1 ... b}$_n$\lss{ -> t}
(where \lss{t} ranges over both types and terms), whereas
\lss{b_1 -> ... -> b}$_n$\lss{ -> C} denotes a curried function
type with result \lss{C}, a computation type describing the effect,
result, and specification of the function.
Contiguous binders of the same type may be written
\lss{(v_1 ... v}$_n$\lss{ : t)}.
Refinement types are written \lss{b\{t\}} 
(\EG \lss{x:int\{x>=0\}} represents natural numbers).
Dependent pairs are written as \lss{x:t_1 & t2}.
The \lss{squash t} type is defined as the refinement \lss{_:unit\{t\}},
and can be seen as the type of computationally-irrelevant proofs of
\lss{t}.
%
%
For non-dependent function types, we omit the name in the argument binding, \EG type
\lss{#a:Type -> (#m #n : nat) -> vec a m -> vec a n -> vec a (m+n)}
represents the type of the
append function on vectors,
where both unnamed explicit arguments and the return type depend on the
implicit arguments. 
%
%
A type-class constraint \lss{\{| d : c t1 .. tn |\}} is a special
kind of implicit argument, solved by a tactic during elaboration.
%
%
\lss{Type0} is the lowest universe of \fstar{}; we also use it to write propositions,
including \lss{True} (True) and \lss{False} (False).
%
We generally omit universe annotations.
The type \lss{either t_1$~$t_2} has two constructors, \lss{Inl:(#t_1 #t_2:Type) -> t_1 -> either t_1$~$t_2}
and \lss{Inr:(#t_1 #t_2:Type) -> t_2 -> either t_1$~$t_2}.
Expression \lss{Inr? x} tests whether \lss{x} is of the shape \lss{Inr y}.
Binary functions can be made infix by using backticks:
\lss{x `op`$~$y} stands for \lss{op x y}.
}
the partial program is a simple web server that is verified
in \fstar{} to respond to every request, while the unverified context is a request handler
represented as a higher-order function.
The web server has the initial control and it gets the request handler as
an argument.
We carefully crafted the case study to be higher-order and to contain the
interesting types of specifications \sciostar{} supports.
\newtext{The case study is still moderately realistic, since we can run our web server
  and it can handle HTTP requests from a real browser (\autoref{sec:running-case-study}).}
\newtext{We also applied \sciostar{} to a few other examples we discuss in
  \autoref{sec:other-examples}.}

We start with the strong interface of the web server, how it is verified (\autoref{sec:key-verified}),
and the assumptions that 
it makes about the handler, \newtext{which is our adversarial target context
(\autoref{sec:assumptions})}.
We then explain how the intermediate interface bridges the gap between the
web server and the handler (\autoref{sec:framework-overview}),
before presenting the weak interface of the handler (\autoref{sec:key-context}).
We then introduce the executable checks and access control policy the
\sciostar{} requires to enforce the specification (\autoref{sec:key-checks}).
%
We use the checks to weaken the assumptions the web server makes about the
untrusted handler by using higher-order contracts (\autoref{sec:key-contracts}).
Finally, we strengthen the type of a target handler using reference monitoring
to enforce the policy (\autoref{sec:key-monitor}).

\subsection{The verified partial program (web server)}
\label{sec:key-verified}
\autoref{fig:running_example} illustrates the most important part of the web server's implementation.
It starts by opening a TCP socket (line \ref{line:srvr_init}) and then,
inside an elided terminating loop,\footnote{\newtext{Our framework supports recursion, as
  long as \fstar{} can prove termination. General recursion is future work though (\autoref{sec:conclusion}).}}
it accepts clients and waits for incoming data in a non-blocking way.
The web server waits for requests from multiple clients at the same time.
%
For each client that sends data (line \ref{line:srvr_select}), the web server reads it and validates the HTTP request (line \ref{line:srvr_get_req}),
and then it passes the request to the handler (line \ref{line:srvr_handle}).
If the handler failed to respond to the request---\IE it returned an error value (tagged with \ls{Inr})---then
the web server responds with an error (\ls{400}) to the client.
%

The web server takes as argument a handler of type \ls{req_handler} and its
result type (line \ref{line:srvr_spec}) is that of an \ls{MIO} computation that
returns a unit and has a trivial pre-condition (indicated by \ls{requires})
and a post-condition (indicated by \ls{ensures}).
The pre- and the post-condition are part of the type,
as indices of the \ls{MIO} monadic effect.
Similarly, the type \ls{req_handler} also contains many interesting refinement
types and pre- and post-conditions.
We highlight the following specifications that appear on the types of the web
server and handler:

\begin{enumerate}[leftmargin=*,nosep]
\item\label{spec:web_server_post}
The post-condition of the \ls{web_server} (line \ref{line:srvr_spec}) ensures that it
responds to all accepted clients.

\item\label{spec:handler_pre}
The first two arguments of the handler are a file descriptor \ls{client}
and a buffer \ls{req} that is guaranteed to contain a valid HTTP request
thanks to the refinement type (line \ref{line:req_ref}).
The \ls{handler} also has as pre-condition that no response has been sent yet to the
latest request (line \ref{line:hndlr_pre}).
\iflater
\ca{There was the proposal to point out that expensive redundant checks can be
    avoided because of this predicate. Not sure if it is that worth since the handler is
    untrusted.}\ch{Didn't understand what you try to say here, but look at
    valid HTTP request, not at the response}
    \ch{Could have a look at LowParse's ``zero-copy'' parsers
    (\url{https://www.microsoft.com/en-us/research/publication/everparse/}),
    where checking that a message is well-formed was indeed expensive,
    and then what one would get in the end was ways to access the
    original buffer.}
\fi

\item\label{spec:send}
The third argument of the \ls{handler} is a callback \ls{send} (lines \ref{line:send_start}-\ref{line:send_stop})
that expects as argument a buffer that contains a valid HTTP response
and requires that no response has been sent yet to the latest
request---\IE the same pre-condition as the \ls{handler}.
The concrete callback for \ls{send} is passed by the web server and it
simply writes the response to the client.

\item\label{spec:handler_post_ctrs}
The type of the \ls{handler} ensures that either it wrote to the file
descriptor it got as argument or it otherwise returns an error value 
(tagged with \ls{Inr}, line \autoref{line:hndlr_post_wrote}).

\item\label{spec:handler_post_acp}
The type of the \ls{handler} also ensures that it
only opens files from a specific folder, reads only from its own opened files, and closes only them (line \ref{line:hndlr_post_only}).
It also ensures that during its execution
only the trusted \ls{web_server} can write (so only when the \ls{handler} calls the \ls{send} callback, line \ref{line:hndlr_post_web}).
Any other IO operation of the \ls{handler}
or the \ls{web_server} is not permitted by the post-condition.
\end{enumerate}
These specifications describe how the web server behaves
and also what assumptions the web server makes about the request handler.
%
The assumptions about the request handler are necessary to verify
that the web server satisfies its post-condition.
In particular, without the post-condition of the handler ensuring that
it writes to the client when it returns a success value (tagged with \ls{Inl}),
one would not be able to verify that the web server responds to every request.
\newtext{The post-condition of the web server is defined using the following predicate
that checks whether every request (read from a file descriptor) is followed at some point by a
response (write to the same file descriptor).}
\begin{lstlisting}
let every_request_gets_a_response (lt:trace) : Type0 =
  let rec aux = (fun lt read_fds -> 
    match lt with
    | [] -> read_fds == []
    | ERead (fd, _) (Inl _) :: tl -> aux tl (fd :: read_fds)
    | EWrite (fd, _) _ :: tl -> aux tl (filter (fun fd' -> fd <> fd') read_fds)
    | _ :: tl -> aux tl read_fds) in
  aux lt []
\end{lstlisting}
The proof that the web server satisfies its post-condition
is done mostly automatically with the help of the SMT solver.
The implementation is split into several functions, which
makes verification more modular by feeding smaller verification
conditions to the SMT solver.
We needed to state and prove a few lemmas (some using interactive proofs~\cite{metafstar}),
about the 
\ls{every_request_gets_a_response} predicate, which had to be proven by
induction on the trace, something that the SMT solver cannot do on its own.
However, once these lemmas were proven, the SMT solver was able to exploit them to
prove the specifications of the various parts of our web server without further
manual intervention. This is thanks to the SMT automation \fstar{} provides
for monadic effects and to our careful design of the \ls{MIO} monadic effect
(\autoref{sec:mio}).

\newtext{
\subsection{The assumptions about the context (handler)}
\label{sec:assumptions}

The strong interface of the web server includes the type \ls{req_handler},
which defines the expected specification of the handler}
(the other components of
a strong interface are given in \autoref{sec:compilation-framework}
but they are not yet relevant here).
%
A traditional compiler that just erases specifications\footnote{As mentioned in
  \autoref{sec:intro}, current extraction mechanisms from proof-oriented
  languages~\cite{Let2008, lowstar, dafny-doc, vale, valefstar} just
  erase specifications, even the extraction mechanisms that
  are formally verified to be correct~\cite{SozeauBFTW20, certicoq17, Paraskevopoulou21}.}
would for instance convert \ls{req_handler}
into a type without refinement types and without pre- and post-conditions:
\begin{lstlisting}
type too_weak_handler_type =
  file_descr -> buffer -> (buffer -> MIO (either unit err)) -> MIO (either unit err)
\end{lstlisting}
However, it would be unsound and insecure to directly link an arbitrary inhabitant of this
type to the partial program.
The simplest example of an adversarial handler that breaks soundness
is the one that immediately returns a success value.
This breaks specification \ref{spec:handler_post_ctrs} of the web server that expects the handler
to write to the client at least once when it returns a success value.
\begin{lstlisting}
let adversarial_handler1 client req send = Inl ()
\end{lstlisting}

\newtext{
  To securely compile the web server, 
  it is required to add dynamic checks to enforce the assumptions about the context---\IE
  specifications \ref{spec:send}, \ref{spec:handler_post_ctrs} and \ref{spec:handler_post_acp}.}
\sciostar{}
enforces specification \ref{spec:send} and \ref{spec:handler_post_ctrs} using higher-order contracts and
enforces specification \ref{spec:handler_post_acp} using an access control policy.
The post-condition of the handler (\ref{spec:handler_post_ctrs}-\ref{spec:handler_post_acp}) is the most interesting 
because it is enforced in part by higher-order contracts and in part by reference monitoring.
The first part of the post-condition---\IE checking whether the
handler wrote to the client (\ref{spec:handler_post_ctrs})---can be soundly enforced by a contract
when the handler returns, but not
using reference monitoring at the level of the IO operations.
The second part of the post-condition \ref{spec:handler_post_acp} specifies the IO behavior of the handler,
which cannot be securely checked by a contract when the handler returns.
%
The post-condition \ref{spec:handler_post_acp} for instance specifies that the handler does not open files outside of a specific folder: if
the untrusted handler opens a password file from a different folder we could detect that when it gives
back control to the web server;
however, this still breaks our post-condition since the
bad event has already happened and it is too late to do anything about it.
%
Therefore, we prevent the violation from happening
by using reference monitoring---\IE using an access control policy that
enforces the two predicates of specification \ref{spec:handler_post_acp} (more in \autoref{sec:key-monitor}).
%

Because the web server is statically verified,
we do not have to dynamically enforce
anything when the server passes control to the handler%
---\IE the pre-condition of the handler, or that the web server
passes a valid HTTP request to the handler (\ref{spec:handler_pre}).
This also includes when control \emph{returns} from
the web server into the handler, such as when the \ls{send} callback
returns after being invoked by the handler---no dynamic enforcement is needed because
\ls{send} is verified to satisfy its post-condition.


\subsection{Bridging the gap with the intermediate interface}
\label{sec:framework-overview}

\sciostar{} combines reference monitoring for recording and controlling IO events
together with higher-order contracts that have access to the monitor state.
Together they
bridge the gap between the strong interface of the verified program and the
weak interface of the untrusted context---\EG 
between the \ls{req_handler} type and something like the \ls{too_weak_handler_type} with erased
specifications above.
The bridging produces a middle point 
we call the {\em
  intermediate interface} (\autoref{fig:overview}), which both the compiled
partial program and the monitored target context share.


The meeting point between what our higher-order contracts do not enforce
and what the reference monitor enforces is the access control policy.
Therefore, an intermediate interface contains types 
annotated with no specifications except that each function has a
post-condition stating that it satisfies the access control policy.
We denote this by using the short notation \ls{MIO a True}~$\Sigma$,
where \ls{True} says that there is no pre-condition and
$\Sigma$ is the specification of the access control policy encoded
as a post-condition.
For instance, the intermediate interface version of \ls{req_handler} looks as follows:
\begin{lstlisting}
type intermediate_handler_type =
  file_descr->buffer->(buffer->MIO (either unit err) $\top$ $\Sigma$)->MIO (either unit err) $\top$ $\Sigma$
\end{lstlisting}
We define compilation as converting a program with a strong interface
into a program with this kind of intermediate interface. Therefore, the compilation
of the web server produces a function that expects a request handler
of \ls{intermediate_handler_type}. The 
$\Sigma$ in this type
is presented in \autoref{sec:key-monitor}---\EG it prevents
the handler from opening files outside a specific folder.

\newtext{
  The target context gets an intermediate interface during target linking,
  when the reference monitor is added, because the monitor enforces that the context
  satisfies the access control policy.}
Since monitoring is done at the level of IO operations, to monitor the target
context it is enough to link it with
a secure IO library
that dynamically enforces the access control policy.
%
On the other hand, the compiled program can call directly the default IO operations that only
record the IO events in the monitor state, but performs no dynamic checks.

\subsection{The weak interface of the unverified context (handler)}
\label{sec:key-context}

\newtext{
  The context gets its specification from the reference monitor,
  which enforces the access control policy for the IO operations.}
Therefore, we make explicit the dependency of the unverified context on the IO library by
modeling the unverified context as being {\em parametric} over the library and also
over the abstract access control policy enforced by the library.
This definition captures the intuition that ``the IO library gives
specification to the context'' because the abstract specification of the
context is given by what IO operations the context uses,
and if it uses only operations that enforce an abstract access
control policy, then we can show that the context satisfies the abstract policy.
%
So intuitively, the type of the unverified request handler looks more like this in \sciostar{}:

\begin{lstlisting}
type still_too_weak_handler_type = $\Sigma$:erased _ -> io_lib $\Sigma$ ->
                file_descr -> buffer -> (buffer -> MIO (either unit err) $\top$ $\Sigma$)) -> MIO (either unit err) $\top$ $\Sigma$
\end{lstlisting}
where
\ls{io_lib}~$\Sigma$ is the type of the IO libraries that enforce a policy that has abstract specification $\Sigma$.


The weak interface clarifies that the abstract specification of the
context depends on the 
specification enforced by the secure IO library.
The fact that an abstract specification is still part of the weak interface does not invalidate the intuition
that the context is unverified because the weak interface is parametric in this specification.
One cannot verify any concrete specification when type-checking the context's code,
but it is trivial to show that it inherits the abstract specification of the secure IO library.
We also have further evidence that our weak interfaces are
expressive enough to represent unverified contexts:
\begin{inlist}
  \item the implementation of our handlers, including the larger non-adversarial
  one that serves files;
  \item a syntactic representation of simply-typed target contexts and a 
  translation
  from any syntactic expression to our shallow embedding 
  (see \autoref{sec:syntactic-contexts}).
\end{inlist}

The benefit of defining weak interfaces like this in \sciostar{} is that after
instantiation the monitored context has a specification in its type, which
greatly simplifies the soundness proof of our secure compilation chain
(\autoref{sec:soundness}).

This type would, however, not prevent the handler from directly calling the IO operations
provided by the \ls{MIO} monadic effect. In proof-oriented languages
like Coq one could simply quantify over the monadic effect~\cite{dm4all}---\IE a computation
monad indexed by a specification monad (see \autoref{sec:mio-essence}
for details)---and use such effect polymorphism to prevent access to IO operations.
However, \fstar{} doesn't allow quantifying over effects
because it needs the concrete implementation of the specification monad
to generate verification conditions for the SMT solver.
Therefore, we were inspired from work by \citet{BrachthauserSO20} on {\em
  parametric effect polymorphism} and combined their idea with the idea of
indexing a monadic effect with a flag~\cite{indexedeffects}.

So, for a start, we added a flag index to the \ls{MIO} effect.
A flag value is simply an inhabitant of a variant type \ls{tflag} with four
constructors: \ls{NoOps}, \ls{GetMStateOps}, \ls{IOOps}
and \ls{AllOps}.
By indexing a computation with an element of this type we can
restrict the operations that the computation can perform:
\begin{inlist}
  \item \ls{NoOps} means that no operation can be used in the computation, 
  \item \ls{GetMStateOps} means that only \ls{get_mstate} operation can be used,
  \item \ls{IOOps} means that only IO operations
  (open, read, etc.) can be used, but not \ls{get_mstate}
  \item \ls{AllOps} means that the computation can access any of
  the operations.
\end{inlist}
%
Because the target context is parametric in the flag, it prevents calls to
any of the operations as the flag could
take the value \ls{NoOps}. We refer to this form of effect polymorphism as
\emph{flag-based effect polymorphism}.
%
%
So the actual weak type of an unverified handler looks like
this:
\begin{lstlisting}
type weak_handler_type = fl:erased tflag -> $\Sigma$:erased _ -> io_lib $\Sigma$ fl ->
          file_descr -> buffer -> (buffer -> MIO (either unit err) fl $\top$ $\Sigma$)) -> MIO (either unit err) fl $\top$ $\Sigma$
\end{lstlisting}
where \ls{fl} is the flag (marked as \ls{erased} so only usable
at the specification level)
%
which is also an index of \ls{io_lib}.
%
%
The instantiation of the target context happens during target linking,
where the flag- and specification-polymorphic context is passed
the \ls{AllOps} flag, the concrete specification of an access control policy,
and the secure IO operations that enforce the concrete access control policy.

\ifsooner
\ch{Maybe also quickly mention that we don't only do \ls{fl:erased _ -> ... -> MIO _ fl ...}
  but also \ls{fl:erased _ -> ... -> MIO _ (fl + IOOps)  ...} for instance.
  So flag-based effect polymorphism is more flexible than just what we use for contexts,
  and we will use that flexibility in other places.}
\fi

The use of flag-based effect polymorphism is key to correctly
model the attack capabilities of the source and target contexts,
enabling us to achieve and prove RrHP (\autoref{sec:rrhp}).


\subsection{Providing the dynamic checks and the access control policy}
\label{sec:key-checks}

\newtext{
  Before discussing about how the dynamic checks are enforced, we first note that
}%
since some \fstar{} specifications are not directly executable,
\sciostar{} requires the dynamic checks to be used inside the
higher-order contracts and the access control policy to be enforced by the reference monitor.
The \fstar{} type checker
asks 
for \remove{all (and only) }the necessary dynamic checks,
and it also verifies that they indeed imply the specification.
Since this verification is done in \fstar{} it takes advantage of its support
  for SMT automation and interactive proofs~\cite{metafstar}.
\sciostar{} also leverages type classes in our implementation of higher-order contracts which
gives more automation in figuring out the dynamic checks, which works great on
simpler cases.
While one can make arbitrary assumptions about the context, not all of them have
sound dynamic checks that are both \emph{precise enough} and \emph{efficiently enforceable},
thus it becomes a design decision on what assumptions are made and what dynamic checks are picked.
For our example, \sciostar{} requires dynamic checks that imply the pre-condition of \ls{send} (\ref{spec:send})
and the first part of the post-condition of the handler (\ref{spec:handler_post_ctrs}),
as well as an access control policy that implies the other part (\ref{spec:handler_post_acp}).

While our specifications use traces, these can be impractical
for dynamic enforcement because they grow indefinitely.
Instead, \sciostar{} gives the ability to pick a different state for the
monitor as long as it abstracts the trace, and efficient custom checks
over this abstract state that don't have to traverse a linear trace.
%
%
We discuss more about how we do this in \autoref{sec:abstract-state}, but until then,
in what follows we assume that the monitor stores the trace of events
to simplify the technical details.
%
%

\subsection{Compilation using higher-order contracts}
\label{sec:key-contracts}
The \sciostar{} compiler converts a program with a strong interface
into a program with an intermediate interface. For example,
compilation of the web server wraps it
into a new function that takes as argument an intermediately-typed
handler.
The intermediately-typed handler then is made strongly-typed (of type \ls{req_handler})
by enforcing higher-order contracts on it.
\sciostar implements higher-order contracts by using two dual functions:
\ls{export} converts strongly-typed values into intermediately-typed values, and
\ls{import} converts intermediately-typed values into strongly-typed  values.
The \ls{import} and \ls{export} functions are based on the types of the values
and defined in terms of each other, which is needed for supporting higher-order functions~\cite{FindlerF02}.

Import of the intermediately-typed handler wraps it inside of a new function
of type \ls{req_handler},
in which first the strongly-typed arguments are exported.
The most interesting is the export of \ls{send}:
export wraps it into a new intermediately-typed function that
enforces the refinement type and \ls{send}'s
pre-condition (\ref{spec:send}). 
Not enforcing the refinement or the pre-condition of \ls{send} would be unsound, situation in
which \fstar{} would not even let us call \ls{send}, since it would
enable the handler to respond to the same client more than once or to call it with an invalid HTTP response:
\begin{lstlisting}
let adversarial_handler2 client req send = send (Bytes.of_string "hello") // invalid HTTP response
\end{lstlisting}
In case the dynamic checks of the refinement and of the pre-condition
succeed, the \ls{send} function is called and its result is exported and returned,
otherwise if one of the check fails then
\ls{Inr Contract_failure} is returned.
This is why the return type of \ls{send} is \ls{either unit err}, meaning that it 
can fail.
%
%
Thus, when a contract fails (or and when the monitor prevents a bad IO operation),
an error
is returned, which allows the program and the context to handle
the errors.\footnote{\newtext{We allow for error recovery, since it would be
    unacceptable for our web server to crash whenever a dynamic check fails.
    Our secure compilation theorem allows the information flow entailed by
    error recovery, as discussed in \autoref{sec:rrhp}.}}

After exporting the strongly-typed arguments, 
the intermediately-typed handler is called with them and an intermediately-typed result is obtained.
\newremove{In our case, the result type is just \ls{either unit err}, thus the import is just the identity function.
However, before returning this as the result of the imported handler,}\newtext{
  After importing the result, but before returning it,
  we have to dynamically enforce the post-condition}---\newtext{\IE that} it wrote to the client.
This dynamic check prevents the previously presented \ls{adversarial_handler1}
attack to work
because after being imported, instead of returning \ls{Inl ()},
it returns the error \ls{Contract_failure},
thus falling into the branch of the 
web server that checks whether the handler ran into an error and responds with HTTP error 400, ensuring 
that every request gets a response (\ref{spec:web_server_post}).


Since \sciostar{} gets a proof that enforcing this dynamic check implies the post-condition \ref{spec:handler_post_ctrs} (\autoref{sec:key-checks})
and since the intermediately-typed handler already satisfies post-condition \ref{spec:handler_post_acp},
it is easy
for the SMT solver
to finish the proof needed to type the handler with the strong type \ls{req_handler}.



\sciostar{} uses higher-order contracts not only during compilation but also during back-translation,
the dual of compilation. For the secure compilation proofs in \autoref{sec:rrhp}, we have to define
back-translation from a target context with a weak interface to a source context that is still
flag polymorphic, which involved designing our
higher-order contracts mechanism so that it can handle flag-based effect polymorphism.
For that we had to abstract away the enforcement of the checks from the higher-order contracts
and introduced a new notion of {\em effectful checks} (see \autoref{sec:exportable-importable}).

\subsection{Enforcing the access control policy by reference monitoring}
\label{sec:key-monitor}

In our example,
the assumption that we have to enforce using reference monitoring is
post-condition \ref{spec:handler_post_acp} of the handler, stating that it can only open, read, and close its
own files, and that only the web server is allowed to write during the execution (when
the handler calls \ls{send}).
Our traces are informative so that we can distinguish between the events produced
by the partial program (\EG web server) and those produced by the context (\EG handler).
This is because every event of the trace contains a bit of information of type \ls{caller}
that can have the value \ls{Prog} or \ls{Ctx}.
This bit 
allows us
to enforce a stronger specification on the untrusted context than on the partial program---\EG the
handler {\em cannot} directly write to any file descriptor, but the web server can
write to the clients.

We use the fact that the monitor enforces an access control policy on the
context to give it a specification. However, using the policy directly to give
specification would say only half of the story because the policy characterizes
only what the context can do---\IE it would not
specify what the partial program does in case the context calls (back) the partial program. Because of this,
the \sciostar{} framework requires both an access control policy, denoted by $\Pi$ of type \ls{policy}, and
its specification, denoted by $\Sigma$ of type \ls{policy_spec}.

\begin{lstlisting}
type policy_spec = h:trace -> caller -> op:io_ops -> arg:mio_sig.args op -> Type0
type policy ($\Sigma$:policy_spec) = h:trace -> op:io_ops -> arg:mio_sig.args op -> r:bool{r ==> $\Sigma$ h Ctx op arg}
\end{lstlisting}
The access control policy is enforced {\em only} on the IO calls of the context,
while its specification characterizes the IO calls of the context {\em and} the IO calls
of the partial program during the execution of the context,
which is reflected by the extra \ls{caller} argument above.
The refinement on the returned value of the
policy makes sure that the policy implies the specification for the events of the context.

Here we give an example of an access control policy $\Pi$
and its specification $\Sigma$ that enforces the post-condition of the handler
(\ref{spec:handler_post_acp}).
The intuition is that the policy $\Pi$ allows the
handler only to open, read, and close its own files, while the specification $\Sigma$
says that the trace produced by the handler contains the events allowed by the policy plus
the writes done by the \ls{send} callback.

\vspace{-1em}\noindent
\begin{minipage}[t]{.5\textwidth}
\raggedright
\begin{mdframed}[backgroundcolor=black!5,hidealllines=true]
\begin{lstlisting}[frame=none]
val $\Sigma$ : policy_spec
let $\Sigma$ h caller op arg : Type0 =
match caller, op, arg with
| Ctx, Openfile, fnm -> fnm `in_folder` "/temp"
| Ctx, Read, fd -> is_opened_by_Ctx fd h
| Ctx, Close, fd -> is_opened_by_Ctx fd h
| Prog, Write, fd -> True
| _, _, _ -> False
\end{lstlisting}
\end{mdframed}
\end{minipage}
\begin{minipage}[t]{.5\textwidth}
\raggedleft
\begin{mdframed}[backgroundcolor=black!5,hidealllines=true]
\begin{lstlisting}[frame=none]
val $\Pi$ : policy $\Sigma$
let $\Pi$ h op arg : bool =
match op, arg with
| Openfile, fnm -> fnm `in_folder` "/temp"
| Read, fd -> is_opened_by_Ctx fd h
| Close, fd -> is_opened_by_Ctx fd h
| _, _ -> false
EOF
\end{lstlisting}
\end{mdframed}
\end{minipage}


We can encode such a policy as a post-condition by stating that
each event that happened during the execution satisfies $\Sigma$.
For that we use the following function \ls{enforced_locally} to
encode it as 
\ls{MIO a fl (fun _ -> True) (fun h lt -> enforced_locally}~$\Sigma$~\ls{h lt)},
which we shorten to \ls{MIO a fl True}~$\Sigma$ (from \autoref{sec:framework-overview}).
\begin{lstlisting}
let rec enforced_locally ($\Sigma$ : policy_spec) (h lt: trace) : Type0 =
  match lt with | [] -> $\top$ | e  ::  t -> let (| caller, op, arg, _ |) = destruct_event e in
                               $\Sigma$ h caller op arg /\ enforced_locally $\Sigma$ (e::h) t
\end{lstlisting}

The following three examples of adversarial request handlers all try to
violate the access control policy: 
\ls{handler3} tries to open a file outside of the \ls{/temp} folder, which is
the only folder authorized by the contract;
\ls{handler4} tries to write directly to the client, bypassing the
\ls{send}; \ls{handler5} tries to use IO operations outside of those
authorized by opening a socket.
All of them are prevented from executing by the reference monitor.
%
\begin{lstlisting}
let adversarial_handler3 sec_io client req send = sec_io Openfile ("/etc/passwd",[O_RDWR],0x650)
let adversarial_handler4 sec_io client req send = sec_io Write (client,(Bytes.of_string "hello"))
let adversarial_handler5 sec_io client req send = sec_io Socket ()
\end{lstlisting}
\newtext{These examples are parametric in the IO library
(\ls{sec_io}) 
as described in \autoref{sec:key-context}, which
was hidden for simplicity in the previous examples.
The type of the IO library (\ls{io_lib}) is
defined using a single dependent function that takes as arguments
the operation (\EG \ls{Read}) and the arguments (\EG file descriptor) and
returns the result of the IO operation (\EG buffer).
The post-condition of the IO library guarantees what we mentioned earlier, that the
reference monitor enforces the access control policy and if an IO operation
is prevented from executing the trace does not change.}
\begin{lstlisting}
type io_lib (fl:erased tflag) ($\Sigma$:policy_spec) = (op : io_ops) -> (arg : mio_sig.args op) ->
  MIO (mio_sig.res op arg) fl $\top$ (ensures (fun h r lt -> enforced_locally $\Sigma$ h lt /\
                                               (match r with
                                                | Inr Contract_failure -> lt == []
                                                | r' -> lt == [convert_call_to_event Ctx op arg r'])))
\end{lstlisting}
\newtext{To generate the secure IO library,
we wrap the default IO library (denoted by \ls{call_io}) into
new functions that when invoked first enforce the access control policy $\Pi$,
by using the \ls{enforce_policy} function below.
We enforce the access control policy on each operation
by retrieving the monitor state using the \ls{get_mstate} operation,
and checking if the operation is allowed or not.
For the secure IO library, the caller is always set to be the context (denoted by \ls{Ctx}).}
\begin{lstlisting}
val enforce_policy : #$\Sigma$:policy_spec -> $\Pi$:policy $\Sigma$ -> io_lib $\Sigma$ AllOps
let enforce_policy $\Pi$ op arg = if $\Pi$ (get_mstate ()) op arg then call_io Ctx op arg else (Inr Contract_failure)
\end{lstlisting}




\section{The \ls{MIO} monadic effect}
\label{sec:mio}

In this section we explain how we represent IO programs and their specifications
using a new monadic effect that we call \ls{MIO}
(for \emph{Monitored IO}).
We begin by introducing a general construction for obtaining monadic
effects 
and then explain how to set up
each of the ingredients required for this construction:
computations
, specifications
, and how these last two relate 
(\autoref{sec:mio-essence}). We finally show how IO operations are
defined in \ls{MIO} (\autoref{sec:mio_ops}).


\subsection{The Dijkstra monad behind the \ls{MIO} monadic effect}
\label{sec:mio-essence}

We seek to write IO computations as terms of a type \ls{MIO alpha fl
pre post}. The arguments to \ls{MIO}
can be summarized as follows: $\alpha$ is the return type of
the computation, \ls{fl} constrains which IO operations
the computation can use (\autoref{sec:key-context}),
\ls{pre} is a pre-condition over the
trace
of past IO events (which we sometimes refer to as \emph{history})
that must be satisfied to be able to call the
computation, and \ls{post} is a post-condition over the result and the
trace produced by the current computation.

To define our \ls{MIO} effect, we use a monad-like structure indexed
by specifications known as a Dijkstra monad~\cite{dm4free,dm4all}. For a
Dijkstra monad $\mathcal{D}$, a type $\mathcal{D}~\alpha~w$
classifies effectful computations returning values in $\alpha$ and
specified by $w : W~A$, where $W$ is called
a \emph{specification monad}.
%
%
We follow \citet{dm4all}, and construct a Dijkstra monad for \ls{MIO}
by picking three ingredients: i) a computational monad, ii) a
specification monad, and iii) a monad morphism from the former to the
latter; obtaining thus an effect in which computations are indexed by
specifications.
%
This effect captures the return type ($\alpha$) and specification
(\ls{pre} and \ls{post}). The flag index (\ls{fl})
described in \autoref{sec:key-context}
and \autoref{sec:key-monitor} for controlling
what operations the computation can access, is added by refining the
effect with an extra index following \citet{indexedeffects}.
%
%
In the following paragraphs, we provide a concise summary of the
ingredients above, and elaborate them in more detail in the subsequent
subsections.

%
For the computational monad, we use a free monad that is
parametric in the underlying primitive operations.
Free monads are particularly advantageous when it comes to
representing IO~\cite{LetanR20,dm4all,DBLP:conf/haskell/SwierstraA07},
as its computations have an inherent tree structure.
In these trees, each node corresponds to a call to an
operation, encapsulating both the operation itself and its arguments, which act as \emph{output} of the program.
Furthermore, each operation node has a corresponding child
node for each result, which act as \emph{input} for the program.
%
For simplicity, we define operations for file management and for
socket communication, but these account only for one possible
instantiation of the monad. Our approach can be extended to other IO
operations as needed.\iflater\ch{could add that this could also be instrumentation
  (output) events, but first we would need an example showing that?}\fi{}
Moreover, the free monad could be used to implement other effects such
as exceptions, state, and non-determinism by extending the signature of effects
\cite{BauerP15}, and 
we plan to integrate some of these effects in the future.

The \ls{MIO} monadic effect uses a specification monad that captures the
behavior of a computation as a predicate transformer, \IE a function
that given a post-condition, returns a pre-condition that is strong
enough to guarantee the post-condition after execution of the
computation. As explained in \autoref{sec:intro}, in our setting a
pre-condition is a property of the current trace, and a post-condition
is a property of the result and the new trace.
When we write
specifications, we work directly with the pre- and post-condition,
and not with the predicate transformer. This is justified by the
existence of a translation from pre- and post-condition pairs to predicate
transformers~\cite{dm4all}.


Finally, we need to define a monad morphism from the computational monad
to the specification monad, so from a computation in the free monad to a predicate transformer.
%
These steps allow us to build a Dijkstra monad for
statically verifying IO programs with respect to specifications, but we also
need a mechanism to support dynamic checks.
For this we include a new silent operation
called \ls{get_mstate} that returns 
the reference monitor state, 
which allows us to write dynamic checks.
Dijkstra monads can be implemented in other proof oriented programming
languages, not only in \fstar{}. For instance, \citet{dm4all} have
also implemented Dijkstra monads in Coq. Nevertheless, \fstar{} stands
out due to its built-in support for SMT automation and the ease of
defining new effects, making it a particularly convenient language
for working with Dijkstra monads.

\paragraph{The computational monad.}
\label{sec:computational-monad}
For the \ls{MIO} effect, we chose a computational monad that
is parametric in the underlying primitive operations by using a free
monad \citep{BauerP15} that can accept any signature
of effects \citep{LetanRCH21}.
%
%
A signature is described by a type of operations together with
the arguments' 
and result type 
for each operation:
\begin{lstlisting}
type ops_sig (ops:Type) = { args : ops -> Type ; res : (op:ops) -> args op -> Type }
\end{lstlisting}


We use the usual representation of the free monad, as trees
whose nodes are operation calls, and whose leaves are the values returned by a
computation.\ifanon\else\footnote{A special constructor of our free monad that is needed in
  \fstar{}~\cite{pdm4all} is omitted from our presentation.}\fi{}
%
%
%
Our \ls{Call} constructor 
comes with an additional
parameter of the variant type \ls{caller} (\autoref{sec:key-monitor}) that we use to mark
whether the computation making the call is the partial program (\ls{Prog}) or the
context (\ls{Ctx}).\ch{This seems the right place to
(2) explain why allowing the code to pass in this flag is not a problem; as far
  as I remember the reasoning goes like this:
  the context can't directly call these operations anyway,
  our verified operation wrappers set it appropriately to context,
  and the program is anyway statically verified and has no reason to want
  stricter rules imposed on it by the specs.}
%
%
%

\begin{lstlisting}
type free (ops:Type) (s:ops_sig ops) (a:Type) : Type =
| Call : caller -> (op:ops) -> (arg:s.args op) -> cont:(s.res op arg -> free ops s a) -> free ops s a
| Return : a -> free ops s a
\end{lstlisting}
%
The type constructor \ls{free} forms a monad independently of the
operations and signature, with \ls{free_return} and
\ls{free_bind} combinators we define in the usual way~\cite{BauerP15}.

For \ls{MIO}, the signature we are interested in is given by a type of operations:
\begin{lstlisting}
type mio_ops = | Openfile | Read | Write | Close | Socket | Setsockopt | Bind | SetNonblock | ... | GetMState
\end{lstlisting}
The first operations
are natural IO operations
(captured by a sub-type \ls{io_ops}), while the last one is
the operation that allows performing dynamic verification by
retrieving the current trace.
The three dots indicate the existence of
more IO operations, such as those related to network communication
that we use in \autoref{sec:key-ideas}, but which we do not write here
for the sake of concision.
%

The signature we give 
enables all IO operations to return errors,
as their result type 
is defined to be either a proper result or an error.
For example, the \ls{Read} operation has as argument
type \ls{file_descr}, and as return type \ls{either buffer err}.
\iffull
The left case of \ls{either}
describes a successful read from the file descriptor, while the right
case uses the type \ls{err} 
to signal an error.
\fi
The \ls{GetMState} operation has argument type \ls{unit} and result
type the type of the monitor state. For simplicity, we
assume here \ls{GetMState} to return the current trace
until \autoref{sec:abstract-state}, where we update it to return an
abstract state.
%
%
The 
computational monad 
is obtained 
by
instantiating \ls{free} with the type of the operations (\ls{mio_ops})
and their signature (\ls{mio_sig}):
\begin{lstlisting}
type m a = free mio_ops mio_sig a
\end{lstlisting}

\paragraph{The specification monad.}
\label{sec:specification-monad}
The role of the specification monad is to give a logical account of
the semantics of the computation. We use a predicate transformer
semantics for the specification of computations, in which predicates
are properties written over a trace of past IO events, with traces
represented simply as lists of events.
We chose a specification monad that
allows writing a pre-condition over the entire history of events,
while the post-condition is written over the result of the computation
and the events produced by it (we refer to it as local trace).
%
These events are defined following the signature of the operations. For each
IO operation, we have an event constructor capturing the
operation name, caller, argument and result of the operation call:
\begin{lstlisting}
type event = | EOpenfile : (c:caller) -> a:mio_sig.args Openfile -> (r:mio_sig.res Openfile a) -> event | ...
\end{lstlisting}
%
For better SMT automation, we have opted to define events as a
variant type, with each operation having its own constructor.
This approach proved to be more effective in \fstar{}
compared to using dependent triples based on \ls{io_ops}.
Finally, \ls{GetMState} does not have an event associated.

We define a type constructor \ls{w} that captures transformers from
post-conditions to pre-conditions:
\begin{lstlisting}
type trace = list event
type w a = w_post:(lt:trace -> r:a -> Type0) -> w_pre:(h:trace -> Type0)
\end{lstlisting}
In this representation of specification, we interpret the history of
events in pre-conditions as given in reverse chronological order.
This is another factor contributing to successful SMT automation:
\newtext{Consider for instance the \ls{is_open} predicate
that checks whether some file descriptor \ls{fd} is open according to
the history. \ls{is_open} proceeds by looking at
events in the trace to see whether there is some \ls{EOpenfile _ (Inl
fd)} event that was not followed by an \ls{EClose fd _} event.
When such a predicate is checked, most often the full history is not known,
rather it is some abstract value \ls{h}. Therefore, when one
adds more events, the abstract history chunk is found at the end,
\ls{last_event :: h}, which is good because the SMT can reduce
\ls{is_open fd (last_event :: h)} to \ls{true} if \ls{last_event} is \ls{EOpenfile _ (Inl fd)}.}

%
Predicate transformers can be naturally organized as continuation-like
monads~\cite{Jensen78}. Indeed, as in the case of \ls{free}, the type \ls{w} comes
equipped with a monadic structure given by combinators \ls{w_return} and \ls{w_bind}.
%
In addition, a specification monad also needs to come equipped with an
order between computations of the same type. For \ls{w}, we define
this order as follows:
\begin{lstlisting}[framesep=1pt]
let ([=) wp1 wp2 = forall h post. wp2 post h ==> wp1 post h
\end{lstlisting}
This order is a form of refinement~\cite{SwierstraB19} that compares
specifications by precision. 
Combinators \ls{w_return}
and \ls{w_bind} are monotone with respect to this order.
To form an ordered monad 
we restrict to only
those predicate transformers that are monotonic---\IE which map
stronger post-conditions to stronger pre-conditions---which we enforce
by refining type \ls{w}.

\paragraph{The monad morphism and the Dijkstra monad.}
\label{sec:monad-morphism}
As the third ingredient for constructing a Dijkstra monad, we need a
monad morphism from the computational monad \ls{m} to the specification
monad \ls{w}. We call such a morphism \ls{theta}
and it has type \ls{(#a:Type) -> m a -> w a}.
It is defined by recursion
on the monadic computation, and uses the following function
\begin{lstlisting}
let mio_wps (c:caller) (op:mio_ops) (arg:mio_sig.args op) : w (mio_sig.res op arg) = fun post h ->
  if GetMState? op then post [] h else (forall (r:mio_sig.res op arg). post [convert_call_to_event c op arg r] r)
\end{lstlisting}
that maps an operation and its arguments to a specification.
For \ls{GetMState}, this function captures the history, and
passes it as the result to the post-condition (in \autoref{sec:abstract-state} we
change this definition to work with an abstract state).
For IO operations, the
post-condition receives a single-event trace corresponding to the IO
operation performed, which is converted 
by the auxiliary
function \ls{convert_call_to_event}.

We proved that \ls{theta} is indeed a monad morphism~\cite{dm4all,relational700},
which means that it preserves
the \ls{return} and \ls{bind} of \ls{mio} and \ls{w}.
The representation of our Dijkstra monad is then:
\begin{lstlisting}
type dm (a:Type) (wp:w a) = c:(m a){theta c $\mathrel{\sqsubseteq}$ wp}
\end{lstlisting}
The idea is that a computation is indexed by \ls{wp} if its image
by \ls{theta} is smaller than \ls{wp},
which means that the specification computed
by \ls{theta} is refined by the one given in the index.


\paragraph{The layered effect.}
The final step for defining \ls{MIO} is to add
the flag that restricts the IO operations, which 
is needed to 
write \emph{flag-polymorphic} context types (\autoref{sec:key-context}).
We use a predicate \ls{satisfies : m 'a -> tflag -> bool}
that decides if the underlying computation uses
only the operations allowed by the flag. The representation of the
\ls{MIO} monadic effect is then defined as follows:
\begin{lstlisting}
type mio (a:Type) (fl:erased tflag) (wp:w a) = c:(dm a wp){c `satisfies` fl}
\end{lstlisting}

Given all the ingredients above, we use one of the \fstar{}
extension mechanisms~\cite{indexedeffects} to define the \ls{MIO} effect:
%
\begin{lstlisting}
effect MIO (a:Type) (fl:erased tflag) (pre : trace -> Type0) (post : trace -> a -> trace -> Type0) = ...
\end{lstlisting}
By defining \ls{MIO}, we can write \fstar{} code
in a direct and ML-like applicative syntax (\autoref{fig:running_example})
that automatically elaborates into the monadic combinators.
More importantly, \fstar{} uses the specification monad to compute verification
conditions and discharge them using the SMT solver.


\subsection{The \ls{MIO} effect operations}
\label{sec:mio_ops}

We have introduced the \ls{MIO} effect on top a free monad that
supports constructors such as \ls{Openfile}, \ls{Read}, \ls{Write},
and \ls{Close}. However, \ls{MIO} computations do not need to know
about these underlying constructors. Instead, we provide a function
that wraps the IO operations and provide a meaningful specification
derived from \ls{mio_wps}:
\begin{lstlisting}
let call_io (c : caller) (op : io_ops) (arg : mio_sig.args op) :
  MIO (mio_sig.res op arg) IOOps (requires (fun h -> True))
       (ensures (fun h (r:mio_sig.res op arg) lt -> lt == [convert_call_to_event caller op arg r])) =
  MIO?.reflect (Call c op arg Return)
\end{lstlisting}
%
This specification asserts that the computation will have as a local
trace a single event which corresponds to the performed operation. The
flag used is \ls{IOOps}, which indicates that the computation
performs (only) IO operations.
For writing code, we define synonyms for \ls{call_io} for each IO operation,
such as \ls{openfile}, \ls{read} and \ls{close}.
Similarly, we also define \ls{get_mstate}, a wrapper for the operation \ls{GetMState},
that takes no arguments and returns a \ls{trace} (for now, we change this in \autoref{sec:abstract-state}).
%
The flag for its effect
is \ls{GetMStateOps}
. The specification
ensures that the result of calling \ls{get_mstate} is the history (\ls{r == h}), and
moreover, that the local trace is empty (no IO events happened, \ls{lt == []}).
\begin{lstlisting}
let get_mstate () : MIO trace GetMStateOps (requires (fun h -> True)) (ensures (fun h r lt -> r == h /\ lt == [])) =
  MIO?.reflect (Call Prog GetMState () Return)
\end{lstlisting}




\section{Higher-order contracts}
\label{sec:contracts}

We gave a brief presentation of how higher-order contracts
work in \autoref{sec:key-contracts}, and now we give a more detailed picture of how the \ls{import} and \ls{export}
functions are implemented. The two functions are designed to be used
during the compilation of the partial program and during the back-translation of the
context, which are explained later in \autoref{sec:secure-compilation}, so the design was
influenced by the need to work with flag-based effect polymorphism and
by the need to define back-translation.
As we said, \ls{import} and \ls{export} are two dual functions that
convert between strongly-typed values and intermediately-typed values.
Because the \ls{import} and \ls{export} functions are implemented in \fstar{}
and the conversion happens between \fstar{} types directly, the correctness of
the type conversion is verified by \fstar{} typing.


\subsection{Intermediate types}
\label{sec:iterm-types}

To begin, we define what we mean by strong types and intermediate types.
We define \emph{intermediate types} as a subset of \fstar types,
by using the
trivial \ls{interm} type class from \autoref{fig:interm}.
%
The \ls{interm} 
instances represent the types of the target language
after the reference monitor was added to gave them a specification by
enforcing the access control policy (\autoref{sec:key-monitor}).
%
%
We provide instances for every base
(unrefined) type, as well as pairs, sums, options, and non-dependent functions.
%
%
The \ls{interm} type class is indexed by an operations flag (\ls{fl},
introduced in \autoref{sec:key-context}) and
the specification of an access control policy ($\Sigma$).
%
These indices ensure that higher-order functions are
invariant over the flag \ls{fl} and the policy $\Sigma$. This is realized
by only defining instances that respect this property.
%
We need intermediately-typed higher-order functions to be invariant over the flag
and the policy so that they are aligned with the weakly-typed higher-order functions
which are also invariant over those.

We do not give a definition for {\em strong} types.
We consider strong types the types from which we can export from and
we can import into.
This definition is open because we implement \ls{import} and \ls{export}
using type classes.
%
We define instances for
intermediate types, base refined types, pairs, sums, options,
and
functions with pre- and post-conditions.

\begin{figure}
\begin{lstlisting}
class interm (ityp:Type) (fl:tflag) ($\Sigma$:policy_spec) = {}
instance interm_unit fl $\Sigma$ : interm unit fl $\Sigma$ = {}
instance interm_file_descr fl $\Sigma$ : interm file_descr fl $\Sigma$ = {}
...
instance interm_err fl $\Sigma$ : interm err fl $\Sigma$ = {}
instance interm_pair fl $\Sigma$ t_1 {| interm t_1 fl $\Sigma$ |} t_2 {| interm t_2 fl $\Sigma$ |} : interm (t_1 * t_2) fl $\Sigma$ = {}
instance interm_either fl $\Sigma$ t_1 {| interm t_1 fl $\Sigma$ |} t_2 {| interm t_2 fl $\Sigma$ |} : interm (either t_1 t_2) fl $\Sigma$ = {}
instance interm_arrow fl $\Sigma$ t_1 {| interm t_1 fl $\Sigma$ |} t_2 {| interm t_2 fl $\Sigma$ |} : interm (t_1 -> MIO t_2 fl $\top$ $\Sigma$) fl $\Sigma$ = {}
\end{lstlisting}
\caption{The 
  type class representing intermediate types.
}
\label{fig:interm}
\end{figure}

\subsection{Exportable and importable type classes}
\label{sec:exportable-importable}
The \ls{import} and \ls{export} functions are fields
of two type classes named
\ls{importable_to} and \ls{exportable_from}.
The classes are indexed by the \emph{strong} type \ls{styp} and
they have as a field the \emph{intermediate} type \ls{ityp}.
%
Function \ls{export} is defined as taking a value of the strong type \ls{styp} and
returning a value of the intermediate type \ls{ityp}, and
function \ls{import} is defined as taking a value of the intermediate type \ls{ityp} and
returning either a value of the strong type \ls{styp} or an error in case that the
dynamic check failed.
The two type classes are additionally indexed by the operations flag \ls{fl} and the specification $\Sigma$;
these extra indices are required by the \ls{interm_ityp} constraint on \ls{ityp} to be an intermediate type.

\begin{lstlisting}
class exportable_from (styp : Type) (fl : erased tflag) ($\Sigma$ : policy_spec) (cks : checks) = {
  ityp : Type; interm_ityp : interm ityp fl $\Sigma$; export : eff_checks fl cks -> styp -> ityp; }
\end{lstlisting}
\vspace{-1mm} 
\begin{lstlisting}
class importable_to (styp : Type) (fl : erased tflag) ($\Sigma$ : policy_spec) (cks : checks) = {
  ityp : Type; interm_ityp : interm ityp fl $\Sigma$; import : ityp -> eff_checks fl cks -> either styp err; }
\end{lstlisting}

One special case is functions that are flag-based effect polymorphic.
We want the wrapping of an effect polymorphic function to result
also in an effect polymorphic function.
This is necessary later in \autoref{sec:rrhp} when we define back-translation
from a target context to source context that are both flag-based effect polymorphic.
Our contracts dynamically enforce pre- and post-conditions about IO behavior,
for which we have to use the effectful \ls{get_mstate} operation---however, we {\em cannot} use it
directly without losing the effect polymorphism.
Our solution is to abstract away how the enforcement of the checks is done
by parameterizing the \ls{import} and \ls{export} functions over, what we call,
{\em effectful checks}.
The effectful checks are indexed by the same operations flag \ls{fl}
as for the wrapped function, so they can be used inside the contract
without affecting the effect polymorphism.
Effectful checks are obtained by merging the \ls{get_mstate} operation
with the checks.
%
The compilation framework automatically converts the checks into effectful checks
and passes them when calling the \ls{import} or the \ls{export} functions.
This approach forces us to index the type classes with the collection of checks
that have to be enforced.

The \ls{cks} collection contains dynamic checks to enforce
the pre- and post-conditions of the strong type \ls{styp}.
However, it does not include dynamic checks to enforce refinement types
because we do not need to use the \ls{get_mstate} operation to enforce them.
To learn about how we convert the refinement types into dynamic checks, we refer
the reader to the work of \citet{TanterT15} because we use the mechanism they
proposed, and most refinement types could anyway also be easily converted to
pre- and the post-conditions.

\iflater
\gm{Do we anywhere require these two to be inverses or somehow related?
That is a natural question at this point.
I suppose it is not immediately needed since the contracts will check
whatever property is needed of the values.}\ca{No, we do not
require that, it is not needed.\ch{not needed for what we managed to already prove, but ...}
We plan to add `import (export (f)) = f` to prove compiler correctness.}
\fi

\subsection{Exporting and importing functions}
\label{sec:export-import-fun}
The situation is most interesting
for functions, where exporting and importing
make use of each other.
The basic idea is that in order to export a function \ls{f}, we create a
intermediately-typed
function that imports its argument, calls \ls{f},
and exports the result back to an intermediate type~\cite{FindlerF02}.
Essentially, the composition \ls{export}~$\cdot$~\ls{f}~$\cdot$~\ls{import}.
Dually, an intermediately-typed function \ls{g} is imported roughly as
\ls{import}~$\cdot$~\ls{g}~$\cdot$~\ls{export}.
However, as \ls{import} can fail (it returns a sum type),
we also require the codomain of each imported and exported
function to ``include'' errors.
Below is the instance for exporting simple functions, \IE
of type \ls{t_1 -> MIO fl (either t_2}~\ls{err) True}~$\Sigma$ without other pre- and post-conditions:

\begin{lstlisting}
instance exportable_smpl_arr (t_1 t_2:Type) (fl:erased tflag) ($\Sigma$:policy_spec) (cks:checks{EmptyNode? cks}) 
  {| d_1 : importable_to t_1 fl $\Sigma$ (left cks) |} {| d_2 : exportable_from t_2 fl $\Sigma$ (right cks) |}
: exportable_from (t_1 -> MIO (either t_2 err) fl $\top$ $\Sigma$) $\Sigma$ fl cks
= { ityp = d_1.ityp -> MIO (either d_2.ityp err) fl $\top$ $\Sigma$; interm_ityp = solve;
   export = (fun cks (f:(t_1 -> MIO (either t_2 err) fl $\top$ $\Sigma$)) (x:d_1.ityp) ->
      match x' <-- d_1.import x (left cks); f x' with       (** do-notation **)
      | Inr err -> Inr err | Inl y' -> Inl (d_2.export (right cks) y') ) }
\end{lstlisting}
The function is exported to the type \ls{i_1 -> MIO fl (either
i_2 err) True}~$\Sigma$, where \ls{i_1},\ls{i_2} are the intermediate types corresponding to
the strong types \ls{t_1},\ls{t_2} according to their instances. \fstar{}
automatically checks that the type of the function
is intermediate too (using the
\ls{interm_arrow} instance from \autoref{fig:interm}). Exporting produces a function
that imports its argument, applies \ls{f}, and exports the result with
\ls{Inl}. If either importing or the function itself fail (returning
\ls{Inr}), the error is returned instead.
 
We see here that the \ls{cks} collection has the structure of a
binary tree, defined using the inductive type \ls{checks}.
Since this function does not have a pre-/post-condition to enforce,
the root node must be an \ls{EmptyNode}, and its left and right
children
are used to 
import \ls{t_1} and export \ls{t_2}.
For a function with pre- and post-conditions, of type \ls{a -> MIO b pre
post}, the tree takes the shape \ls{Node ck left right}, where \ls{ck}
is a dynamic check testing whether the post-condition \ls{post} indeed holds,
and \ls{left} and \ls{right} are sub-trees for \ls{a} and \ls{b} respectively.
For structured, non-arrow types such as tuples or sums, we also use
an \ls{EmptyNode left right} node, as there is no immediate pre- and
post-condition to enforce, but there may be some deeper within the type.
The \ls{Leaf} node is used for base types when there are no checks to perform.

When importing or exporting a function with pre- and
post-conditions, the \ls{cks} collection becomes important.
The \ls{cks} collection must contain a boolean predicate for each pre- and
post-condition (that needs enforcement).
Hence, the instances of \ls{importable_to} and \ls{exportable_from}
require a \ls{cks} collection and alongside a proof that each check
implies the actual (propositional) pre- or post-condition.
Crucially, the dynamic checks have the same structure as the post-condition
so that they can also distinguish between the history and the local trace of a
computation, and they can also distinguish the events of the partial program
from the events of the context. Since the types are higher-order,
the trace of a computation can contain events generated by both the partial
program and the context.%
%
Therefore, we give the following type to dynamic checks:
\begin{lstlisting}
type ck_typ (t_1 t_2 : Type) = t_1 -> h:trace -> t_2 -> lt:trace -> bool
\end{lstlisting}
Given the type of the argument and return value of a function, a dynamic
check is a boolean predicate over the value of the argument, the history before
the call was made, the return value, and the local trace (\IE the events
generated during the activation of the function).
%
%

To enforce the dynamic checks, the compiler automatically generates the
corresponding effectful checks with the same tree structure. The effectful
checks are passed as an argument to the \ls{import} and \ls{export} functions.
An effectful check guarantees that it enforces the corresponding dynamic check.
An effectful check is done in two steps: a ``setup'' and an actual check.
First, for the setup, we must mark the point in history where the
function was initially called, in order to able to determine its
local trace when it returns.
This is accomplished by the effectful check because after the first
call when it captures the initial trace, it returns
a second function that enforces the dynamic check.

\paragraph{Enforcing the post-condition when importing.}
\label{sec:enforce-post}
We are now ready to import a function with an intermediate type into
one with a strong type that has a pre- and a post-condition.
The pre-condition does not need to be dynamically checked, because one can
always strengthen the pre-condition of a function using subtyping,
thus we only have to enforce the post-condition.

Here we show just a combinator that is used in an \ls{importable_to} instance
that focuses only on enforcing the post-condition.
The combinator takes an intermediately-typed function \ls{f}, a dynamic check \ls{ck}, and its effectful counterpart
\ls{eff_ck}, and returns a strongly-typed variant of \ls{f}.
The combinator first ``sets up'' the effectful check, obtaining the
\ls{do_ck}, then it calls \ls{f}, and then it enforces the post-condition
by running \ls{do_ck}, returning an error if the check fails.

\begin{lstlisting}
let enforce_post (t_1 t_2 : Type) (fl:erased tflag) ($\Sigma$:policy_spec)
  (ck : ck_typ t_1 (either t_2 err)) (eff_ck : eff_ck_typ fl ck) 
  (pre : t_1 -> trace -> Type0) (post : t_1 -> trace -> either t_2 err -> trace -> Type0)
  (c1_post : squash (forall x h r lt. pre x h /\ enforced_locally $\Sigma$ h lt /\ ck x h r lt ==> post x h r lt))
  (c2_post : squash (forall x h lt. pre x h /\ enforced_locally $\Sigma$ h lt ==> post x h (Inr Contract_failure) lt))
  (f : t_1 -> MIO (either t_2 err) fl $\top$ $\Sigma$)
: (x:t_1) -> MIO (either t_2 err) fl (pre x) (post x)
= fun x -> let do_ck = eff_ck x in
       let r : either t_2 err = f x in
       if do_ck r then r else Inr Contract_failure
\end{lstlisting}
To import a function, two constraints
have to hold between the pre- and post-condition, the dynamic check, and
the access control policy,
constraints that ensure that the post-condition can be
soundly enforced.
The first constraint, \ls{c1_post}, ensures that the policy together with the
dynamic check enforce the post-condition.
This constraint allows us to return the result when the
check succeeds by giving us that the post-condition holds.
The second constraint, \ls{c2_post}, ensures that the post-condition can be enforced
even if the dynamic check fails as long as the policy was enforced.
This constraint allows us to return \ls{Inr Contract_failure} when the
check fails after we called the function.

\paragraph{Enforcing the pre-condition when exporting.}
\label{sec:enforce-pre}
Exporting a strongly-typed function implies
converting its pre-condition into a dynamic check and involves
a mechanism similar to the one shown above for post-conditions.
The main difference is that checking pre-conditions does not require a
setup-check split, and can be done at once.
To denote a dynamic check for a pre-condition, we reuse the \ls{ck_typ}
type above, using \ls{unit} for the codomain and providing
an empty local trace to the checks.
We show below the \ls{enforce_pre} combinator that is used in
instances of \ls{exportable_from}:

\begin{lstlisting}
let enforce_pre t_1 t_2 fl $\Sigma$
  (ck : ck_typ t_1 unit) (eff_ck : eff_ck_typ fl ck)
  (pre : t_1 -> trace -> Type0) (post : t_1 -> trace -> either t_2 err -> trace -> Type0)
  (c_pre : squash (forall x h. ck x h () [] ==> pre h))
  (c_post : squash (forall h lt r. pre h /\ post h r lt ==> enforced_locally $\Sigma$ h lt))
  (f : (x:t_1 -> MIO (either t_2 err) fl (pre x) (post x)))
: (x:t_1 -> MIO (either t_2 err) fl True$\ \Sigma$)
= fun (x:t_1) ->  if eff_ck x () then f x else Inr Contract_failure
\end{lstlisting}
To export a function, also two constraints have to hold. 
The first constraint, \ls{c_pre}, ensures that the dynamic check enforces
the pre-condition. This constraint gives us that the pre-condition holds
after running the effectful check, allowing us to call the function.
The second constraint, \ls{c_post}, ensures that we can weaken the
post-condition to the access control policy by subtyping.

Effectful checks are the key technical novelty to define back-translation
between a target and a source context that are both flag-based effect
polymorphic (\autoref{sec:rrhp}).

\iflater
    \subsection{Non-interference theorem about the context}

    To formalize that a flag-polymorphic context can't directly
    call the IO operations + GetMState we [use parametricity? to] prove a
    noninterference property showing that a flag-polymorphic context cannot obtain
    more information about the trace than what our higher-order contracts+instrumentation\ch{reference monitor} reveals to it.
    \tw{Let us worry about this part after we have at least some hope that we can do
    it formally.}

    \begin{lstlisting}
    val ni :
    $\Sigma$ : policy_spec ->
    (** for any dynamic check **)
    rc : (rc_typ 'a 'b) ->
    (** the ctx is in a first-order setting. I don't think it matters **)
    ctx: (fl:erased tflag -> $\Sigma$:erased policy_spec -> io_lib fl $\Sigma$ -> typ_eff_rcs fl (make_rc_tree rc) -> unit -> MIO int fl (fun _ -> True) (fun _ _ _ -> True)) ->
    Lemma (
                                    (** one has to instantiate the ctx to be able to call beh **)
        let bh = beh_ctx #(fun _ -> True) (ctx AllOps $\Sigma$ (inst_io_cmds pi) (make_rcs_eff (make_rc_tree rc))) in
        forall h1 lt1 r1 h2 lt2 r2.
        (h1, (lt1, r1)) `pt_mem` bh /\
        (h2, (lt2, r2)) `pt_mem` bh ==>
        ni_traces $\Sigma$ rc r1 r2 h1 h2 [] lt1 lt2)
    \end{lstlisting}
\fi

\section{Dynamic Enforcement Over A Monitor State}
\label{sec:abstract-state}

\noindent
So far, we have described our specifications and dynamic checks
over IO traces. While traces are intuitive and very
expressive, making them a fine choice for specifications, they
are inadequate for an implementation as their size may grow without bound.
We would rather store only the information that is
\emph{needed} to implement the relevant dynamic checks. For example,
if all dynamic checks are about whether file
descriptors are open or not, the monitor could keep a set of
currently open file descriptors and efficiently probe this set
instead of inspecting a trace.
%
%
Importantly, independent of this runtime aspect we wish to still write
specifications over traces, to retain full generality and a single
\emph{lingua franca}.
\sciostar{} allows for such setups with full flexibility, allowing the
programmer to choose exactly what the runtime monitor state should be
and how the checks are performed, while still guaranteeing correctness
over traces. We explain the mechanism in the rest of this section.

\subsection{The monitor state}

The entire \sciostar{} development is in fact parametrized over a type
of \emph{monitor state}, which represents the
runtime information needed to implement the dynamic checks.
Most of the definitions shown previously actually have an extra argument
of type \ls{mstate}. We elided them for simplicity of the previous
sections, but will make them explicit here.

\begin{lstlisting}
type mstate = { typ : Type; abstracts : typ -> trace -> Type0 }
\end{lstlisting}

The \ls{mstate.typ} type must be instantiated by the programmer, who chooses it
according to the dynamic checks that the whole program
must perform. For instance, the type
\ls{list file_descr} would be a sensible choice if all one needs to check
is whether file descriptors are open.
The monitor state is exactly \ls{mstate.typ}.
%
Program traces are never materialized: they only appear
in specifications.


The programmer must also relate the monitoring state to the trace of the
program by defining a predicate \ls{abstracts}.
This relation specifies how the monitor state abstractly models the
trace, and must be kept
invariably
true when applied to the current
monitoring state and the current (ghost, immaterial) trace.
For open file descriptors, one could use
\ls{(fun fds h -> forall fd. fd `mem` fds <==> is_open fd h)},
stating that a descriptor \ls{fd} is in the list if and only if
it is open according to the current trace \ls{h}.

So in fact,
the \ls{get_mstate} operation does not
return a trace, but a monitor state. Its post-condition also guarantees that the returned
state abstracts the current trace (which is still accessible in
specifications). The implementation of \ls{mio_wps} is adjusted similarly.
\begin{lstlisting}
let get_mstate (#mst:mstate) () :
  MIO mst.typ GetMStateOps (requires (fun h -> True)) (ensures (fun h s lt -> s `mst.abstracts` h /\ lt == [])) =
  MIO?.reflect (Call Prog GetMState () Return)
\end{lstlisting}
\newtext{The recording of IO operations in the monitor state is abstracted away in
  \sciostar{}. The lower-level implementation of the state updates is
  explained in \autoref{sec:running-case-study}, where we discuss execution in OCaml.}


\subsection{Enforcing the access control policy and the checks over the monitor state}


The access control policy and higher-order contracts must also work
over monitor states.
A policy, which must soundly enforce a higher-level \ls{policy_spec}
over traces, now takes a monitor state as argument, instead
of a trace. The policy must ensure that, when it returns true, the
policy specification is satisfied for \emph{any} trace that is abstracted
by the current monitor state.


\begin{lstlisting}
type policy (mst:mstate) ($\Sigma$:policy_spec) =  s_0:mst.typ -> 
    op:io_ops -> arg:io_sig.args op -> r:bool{r ==> (forall h. s_0 `mst.abstracts` h ==> $\Sigma$ h Ctx op arg)}
\end{lstlisting}
Coming back to the open files example, this means that a policy can only inspect the set of files to decide whether the operation should be
allowed, and the answer must be right for any trace that would lead to
the current open file set.

As for the dynamic checks (\ls{ck_typ}), they are defined over an initial and
final state of the function:
\begin{lstlisting}
type ck_typ (mst:mstate) (t_1 t_2 : Type) = t_1 -> s_0:mst.typ -> t_2 -> s_1:mst.typ -> bool
\end{lstlisting}
Relating these states to the trace is done in the pre-conditions
of the import and export functions from \autoref{sec:export-import-fun}.
%
%
One is the first constraint required when exporting a function
(\ls{c_pre} from \autoref{sec:enforce-pre}) and the second one is
the first constraint required when importing a function
(\ls{c1_post} from \autoref{sec:enforce-post}):
\begin{lstlisting}
(c_pre : squash (forall x s_0 s_1 h. s_0 `mst.abstracts` h /\ s_1 `mst.abstracts` h /\ ck x s_0 () s_1 ==> pre x h))
(c1_post : squash (forall s_0 s_1 x h r lt . pre x h /\ enforced_locally $\Sigma$ h lt /\
       s_0 `mst.abstracts` h /\ s_1 `mst.abstracts` (rev lt @ h) /\ ck x s_0 r s_1 ==> post x h r lt))
\end{lstlisting}
\ca{The constraint \ls{c_pre} quantifies over two states because the contract that enforces
the pre-condition calls \ls{get_mstate} twice; this is an artifact of the fact that we have
only one type of contracts that is used to enforce pre- and also post-conditions.}



\iflater
\gm{At a higher level I don't know (or remember) why the setup
returns a function instead of just a copy of the state/trace, \ls{enforce_post}
could call ck directly}
\ca{It is very important to first capture the initial state, run the function, and then
  test the post-condition. Maybe one can do the setup/capture differently, but not
  sure if one can call the ck directly. Note, one cannot call \ls{get_mstate} here.}
\fi

As explained in \autoref{sec:exportable-importable}, we have to
abstract away the enforcement of the checks by using effectful checks.
%
%
They are represented by the type \ls{eff_ck_typ} below.
Enforcing a dynamic check \ls{ck} is done in two steps: a ``setup'' and an actual check.
\iflater
\ch{Still talking about 
  \autoref{sec:exportable-importable} since you're just repeating stuff?
  (see the following big comment). \ca{the repetition is from 4.3, but 4.2 is correctly referenced in the beginning. not sure how to avoid the repetition}}
\fi
The setup is done by calling the effectful check,
which captures the current state \ls{s_0} and returns it together
with the second part of the effectful check.
The returned state is labeled with \ls{erased},
thus it cannot be used in the computation: it is only there
in order to specify the effectful check.
\iflater
\ch{So all the
  text so far is just repeating stuff from \autoref{sec:exportable-importable}?
  It's very unclear to me at what point in this text do we stop repeating stuff
  and switch and talk about the monitor state (the subject of this section)?
  \ca{the enforcement is on states, notice that this subsection does NOT exist anymore in 4, it was moved here}
  Please make this clear. The section heading also doesn't help in this respect,
  since this heading could also appear in section 4.\ca{changed heading}\ca{LE: heading was removed}}
\fi
The second part, of type \ls{eff_ck_typ_cont}, is called to enforce
the dynamic check \ls{ck} and guarantees that the property holds: \ls{(b <==> ck x s_0 y s_1)}.
This function also returns an erased state, again only for specification purposes.
%
This gives us the ability to enforce the dynamic check on two different states---\EG
capture \ls{s_0} before calling a function and \ls{s_1} after the function returns---and
this enables us to enforce post-conditions.
Inside the \ls{enforce_pre} and \ls{enforce_post} combinators,
the specifications of the effectful check that
\ls{s_0} and \ls{s_1} abstract the histories, plus the constraints,
make it very easy to prove that the enforcement is done correctly.

\begin{lstlisting}
type eff_ck_typ_cont (mst:mstate) (fl:erased tflag) (t_1 t_2:Type) (ck:ck_typ mst t_1 t_2) (x:t_1) (s_0:mst.typ) =
  y:t_2 -> MIO (erased mst.typ * bool) mst fl True
       (ensures (fun h_1 (s_1, b) lt -> lt == [] /\ s_1 `mst.abstracts` h_1 /\ (b <==> ck x s_0 y s_1))) 
type eff_ck_typ (mst:mstate) (fl:erased tflag) (t_1 t_2:Type) (ck:ck_typ mst t_1 t_2) =
  x:t_1 -> MIO (s_0:erased mst.typ & eff_ck_typ_cont mst fl t_1 t_2 ck x s_0) mst fl True
             (ensures (fun h_0 (| s_0, _ |) lt -> s_0 `mst.abstracts` h_0 /\ lt == []))
\end{lstlisting}

Converting a pure dynamic check into an effectful check is straightforward:

\begin{lstlisting}
let make_check_eff mst (t_1 t_2:Type) (ck:ck_typ mst t_1 t_2) : eff_ck_typ mst GetMStateOps ck = fun (x:t_1) ->
  let s_0 = get_mstate () in
  let cont = (fun (y:t_2) -> let s_1 = get_mstate () in (hide s_1, ck x s_0 y s_1)) in (| hide s_0, cont |)
\end{lstlisting}


\subsection{The web server's monitor state}
\label{sec:webserver-state}

%

For our case study, we choose the following implementation of the monitor state:
(a) a list of file descriptors that were opened by the context,
(b) a single boolean flag that determines whether the current request was answered, and
(c) a list of the file descriptors that were written to.
We do not need to track \emph{all} of the open files, but only those
opened by the context. 
This state contains enough information to
enforce the access control policy (\ref{spec:handler_post_acp}),
the pre-condition of \ls{send} (\ref{spec:send}),
and the first part of the post-condition of the handler (\ref{spec:handler_post_ctrs}), and
in \ls{abstracts} we can see a conjunction of all of these.
\begin{lstlisting}
let myst : mstate = {
  typ = { ctx_opened : list file_descr ; responded : bool ; written : list file_descr };
  abstracts = (fun s h -> (forall fd. fd `mem` s.ctx_opened <==> is_opened_by_Ctx fd h) /\
    (responded <==> ~(did_not_respond h)) /\ (forall fd. fd `mem` s.written <==> wrote_to fd h)); }
\end{lstlisting}
%
%
%
%
%
%
The access control policy is already defined in \autoref{sec:key-monitor} and to use it
with our chosen state we have to replace the \ls{is_opened_by_Ctx fd h} with
\ls{fd `mem` s.ctx_opened}. Below are the dynamic checks required to import the
handler, defined as a \ls{cks} collection of type \ls{checks} (\autoref{sec:export-import-fun}).
The first node is for the post-condition of the handler (\ref{spec:handler_post_ctrs}) and the next
one is for the pre-condition of \ls{send} (\ref{spec:send}).
%
\begin{lstlisting}
let handler_cks : checks myst =
  Node (| file_descr, unit, (fun client s0 _ s1 -> not (client `mem` s0.written) && client `mem` s1.written) |)
    (Node (| Bytes.bytes, unit, (fun res s0 _ _ -> not (s0.responded) && valid_http_response res) |) Leaf Leaf)
    Leaf
\end{lstlisting}
When instantiating the \ls{importable_to} type class using these checks,
\fstar{} automatically proves that three of the four constraints hold,
asking for help only to prove that the first dynamic check implies the post-condition
of the handler.
Moreover, \sciostar{} guarantees that no other check is necessary.

\section{\sciostar{}: Formally Secure Compilation Framework}
\label{sec:secure-compilation}

\begin{figure}
\begin{mdframed}[backgroundcolor=black!5,hidealllines=true]
\begin{lstlisting}[xleftmargin=-6pt,framexleftmargin=0pt]
type interface^S = {
  ctype : erased tflag -> Type;
  $\Sigma$ : policy_spec;
  $\Pi$ : policy $\Sigma$;
  cks : checks;
  importable_ctype : fl:erased tflag -> importable_to (ctype fl) fl $\Sigma$ cks;
  $\psi$ : post_cond; }

type interface^T  =  {
  ctype : erased tflag -> policy_spec -> Type;
  $\Sigma$ : policy_spec;
  $\Pi$ : policy $\Sigma$;
  weak_ctype : fl:erased tflag -> interm (ctype fl $\Sigma$) fl $\Sigma$; }

let compile_interface (I$^S$:interface$^S$) : interface$^T$ = { (** denoted by $\cmp{I^S}$ **)
  $\Sigma$ = I$\overset{S}{.}$$\Sigma$;
  $\Pi$ = I$\overset{S}{.}$$\Pi$;
  ctype = (fun fl _ -> (I$\overset{S}{.}$importable_ctype fl).ityp);
  weak_ctype = (fun fl -> (I$\overset{S}{.}$importable_ctype fl).interm_ityp); }

type prog$^S$ I$^S$ = fl:erased tflag -> I$\overset{S}{.}$ctype (fl+IOOps) -> unit -> MIO int (fl+IOOps) $\top$ I$\overset{S}{.}\psi$
type ctx$^T$ I$^T$ = fl:erased tflag -> #$\Sigma'$:erased policy_spec -> io_lib fl $\Sigma'$ -> I$\overset{T}{.}$ctype fl $\Sigma'$

type prog$^T$ I$^T$ = I$\overset{T}{.}$ctype AllOps I$\overset{T}{.}\Sigma$ -> unit -> MIO int AllOps $\top$ $\top$
type whole$^T$ = unit -> MIO int AllOps $\top$ $\top$
let link$^T$ #I$^T$ (P:prog$^T$ I$^T$) (C:ctx$^T$ I$^T$) = P (C AllOps (enforce_policy call_io I$\overset{T}{.}$$\Pi$)) (** denoted by $C[P]$ **)

let compile_prog #I$^S$ (P:prog$^S$ I$^S$) : prog$^T$ (I$^S\cmparrow{}$) = (** denoted by $\cmp{P}$ **)
  fun (C : (I$^S\cmparrow{}$).ctype AllOps I$\overset{S}{.}$$\Sigma$) -> P AllOps (import C (make_checks_eff I$\overset{S}{.}$cks))

type ctx$^S$ I$^S$ = fl:erased tflag -> io_lib fl I$\overset{S}{.}$$\Sigma$ -> eff_checks I$\overset{S}{.}$cks -> I$\overset{S}{.}$ctype fl
type whole$^S$ = $\psi$ : post_cond & (unit -> MIO int AllOps $\top$ $\psi$)
let link$^S$ #I$^S$ (P:prog$^S$ I$^S$) (C:ctx$^S$ I$^S$) = (** denoted by $C[P]$ **)
  (| I$\overset{S}{.}\psi$, P AllOps (C AllOps (enforce_policy call_io I$\overset{S}{.}$$\Pi$) (make_checks_eff I$\overset{S}{.}$cks)) |)

let back_translate_ctx #I$^S$ (C:ctx$^T$ I$^S\cmparrow{}$) : ctx$^S$ I$^S$ = fun fl sec_io -> import (C fl sec_io) (** denoted by $C^T\bakarrow{}$ **)
\end{lstlisting}
\end{mdframed}
\caption{Secure compilation framework. Idealized mathematical notation.
%
%
}
\label{fig:model_sec_comp}
\end{figure}

\noindent
We now put all the pieces together and define the \sciostar{} secure compilation
framework (\IE languages, compiler, and linker in \autoref{sec:compilation-framework}),
give semantics to the source and target languages (\autoref{sec:beh}), and 
describe our machine-checked proofs that \sciostar{}
soundly enforces a global trace property (\autoref{sec:soundness})
and satisfies by construction a strong secure compilation criterion (\autoref{sec:rrhp}).
We present this in the setting where the partial program has initial
control and the context is the library, then also describe the dual setting when
the context has initial control and the program is the library (\autoref{sec:dual-setting}).

\subsection{Secure compilation framework}
\label{sec:compilation-framework}

We first use the definitions from the previous sections to
formally define our source and target languages, compiler, and linker---to which
we jointly refer as the \sciostar{} secure compilation framework.
We define our languages using shallow embeddings, which is a common
way to express DSLs in \fstar{}~\cite{lowstar,BhargavanBDHKSW21,steel},
and which also simplifies our proofs.
%
%
%
We represent the partial program as a function (since
in \fstar{} modules are not first-class objects and one
cannot reason about them).
%
We now describe the setting where the partial program has the initial control,
thus we make the partial program get the context as argument
(after the context is instantiated with the flag, etc)---the setting
in which the context starts is outlined in \autoref{sec:dual-setting}.

The \sciostar{} framework is listed in \autoref{fig:model_sec_comp} and explained step by step below.
As explained before, the partial program and the context share a higher-order interface.
The difference between the source and the target language is that in
the source language, the partial program and the context share a {\em source} interface,
while in the target language they share a {\em target} interface.
The source interface (type \ls{interface}$^S$) is a dependent record type that contains
the type of the context (denoted by \ls{ctype}),
the access control policy $\Pi$ and its specification $\Sigma$ (both explained in
\autoref{sec:key-monitor}),
and the dynamic checks, \ls{cks}, that are enforced by the higher-order contracts 
(explained in \autoref{sec:contracts}).
The interface also contains a type class constraint \ls{importable_ctype}
ensuring that the guarantees the type of the context (\ls{ctype})
provides to the program
can be enforced using the policy $\Pi$ and the dynamic checks \ls{cks}.
The \ls{importable_ctype} constraint also ensures that \ls{ctype} is flag-polymorphic.

The target interface (type \ls{interface}$^T$) is another record type that contains
the type of the context (\ls{ctype}), but this time, the field \ls{weak_ctype} requires
\ls{ctype} to be a weak type (explained in \autoref{sec:key-context}).
The interface also includes the policy $\Pi$ and its
specification $\Sigma$, which are used during target linking.

The type of the partial source program (\ls{prog}$^S$)
and of the target context (\ls{ctx}$^T$) can also be found in \autoref{fig:model_sec_comp}.
The partial source program is a computation in the \ls{MIO} monadic effect
that can call IO operations (\ls{fl+IOOps}), but that, since it is statically verified, has no
need to call the \ls{get_mstate} operation, so it is otherwise parametric in the flag.
The type of the target context uses \ls{ctype} from the target interface \ls{I}$^T$
and the constraint \ls{weak_ctype} from \ls{I}$^T$ makes sure that \ls{ctype}
is a weak type parametric in the flag.
%
Thus the target context {\em cannot} directly call the IO operations or the
operation \ls{get_mstate} because of flag-based effect polymorphism (as explained
in \autoref{sec:key-context}),
and has to instead use the secure IO operations
(of type \ls{io_lib fl}~$\Sigma'$, type explained in \autoref{sec:key-monitor}) to do any IO.

The compilation of a partial source program (\ls{compile_prog}) is defined
as explained in \autoref{sec:key-contracts}.
The result of the compilation is a partial target program (\ls{prog}$^T$)
that expects a target context that was instantiated so that it satisfies
specification \ls{I}$\overset{T}{.}\Sigma$.
The \ls{MIO} computation of the partial target program (type \ls{prog}$^T$)
has no pre- and
post-condition and is indexed by the flag \ls{AllOps} to be able to run
the dynamic checks that call the \ls{get_mstate} operation.

The target linking (\ls{link}$^T$) first instantiates the target context and
then it applies the partial program to the instantiated context.
The context is instantiated with the \ls{AllOps} flag (since the context calls
into the secure IO operations and wrapped callbacks that use \ls{get_mstate}) and
with the secure IO operations, thus obtaining an instantiated context that
satisfies specification \ls{I}$\overset{T}{.}\Sigma$ (as also explained in
\autoref{sec:key-monitor}).
The result of the target linking is a whole program (\ls{whole}$^T$), which is a
\ls{MIO} computation with no pre- and post-condition, that takes a unit
and returns an \ls{int}.
\iflater
\ch{Why do our program types have useless unit arguments? Warning: We're sending
  this to ICFP :)}\ca{tried to address this}\ch{You explained the 1/4 unit that
  makes sense, but I still see 3/4 spurious-looking occurrences of \ls{unit} in
  Figure 3. Are they not spurious?}\ca{I would say that the explanation covers
  2/4 cases (in the definitions of whole programs).\ch{Okay, 2/4, added now the missing
  parenthesis in wholeS that caused me to misparse it.}  And these two necessary units are
  forcing us to either have an \ls{fun () -> ...} inside linking or unit in the type
  of partial programs.}\ca{better?}\ch{No, still not. Why would necessary thunking
  propagate to unnecessary thunking? Instead of adding an unnecessary unit argument
  to prog just define linking better (hint: use a lambda).}
\ch{Anyway, the hope of my comment is not to add any lame excuses for inelegant
  code to the paper, but to try to get more elegant code.}
\fi

\subsection{Trace-producing semantics}
\label{sec:beh}
We define a trace-producing semantics to reason about whole programs.
For that, we use trace properties as predicates
over complete IO traces, which are traces of terminated executions
together with the final result of the whole
program (which is an \ls{int}, like in UNIX).

\begin{lstlisting}
type trace_property = trace * int -> Type0
\end{lstlisting}

Whole programs are computations in the monadic effect \ls{MIO},
so to reason about them we have to reveal their monadic representation using the \ls{reify} construct of
\fstar{}~\cite{relational}.
Thus, we define the following trace-producing semantics function
\ls{beh},
used in security theorems (\autoref{sec:soundness}):
\begin{lstlisting}
let beh (wt:whole^T) : trace_property = beh_m (reify wt)
\end{lstlisting}
\ls{beh} is defined for whole target programs, but it can also be used to 
define trace-producing semantics for whole source programs.
For concision, we write \ls{beh} for both target and source programs.

Having \ls{reify} reveal the underlying representation of the computation,
we define the semantics function \ls{beh_m} over the computational monad \ls{m}.
In its definition, we use the already presented monad morphism $\theta$
(\autoref{sec:monad-morphism}), which gives
a weakest-pre-condition semantics to \ls{m},
and which we adapt into a trace-producing semantics
by using first a backward predicate transformer and
then a pre-/post-condition transformer~\cite{dm4all}:
\ch{Still no clue what
  ``the backward predicate transformer'' and ``the pre-/post-condition transformer''
  mean and to what they correspond in your code.}
\begin{lstlisting}
let beh_m (wt : m int) : trace_property = fun (lt,res) -> forall p. theta wt p [] ==> p lt res
\end{lstlisting}
Because we use the monad morphism when defining \ls{beh_m},
we can lift the post-condition of a computation to a trace property---for
example, for a computation $w$ of type \ls{mio int AllOps True}$~\psi$ (\autoref{sec:monad-morphism}),
where $\psi$ is a post-condition, we can state that $\mathsf{beh\_m}(w) \subseteq \psi$,
where $\mathsf{tp} \subseteq \mathsf{post}$ is defined
as \ls{forall lt r. tp (lt, r) ==> post [] r lt}. This follows naturally from the
representation of the Dijkstra monad (explained in \autoref{sec:monad-morphism}) because the 
refinement in the type of the monad becomes an extrinsic property.

\subsection{Soundness of hybrid enforcement with respect to global trace property}
\label{sec:soundness}
\ch{Here is one point I found shady about this theorem:
{\bf Is there really no connection between \ls{I}$^S.\psi$ and $\Sigma$/$\Pi$?
    So you can enforce the laxest possible policy on the context and
    you still get a strong global policy? I would be very much be surprised!}}
    \ca{of course not, tried to explain at the end}
    \ch{In fact, will the source program be able to call the source context at
        all if there is no connection between \ls{I}$^S.\psi$ and $\Sigma$?\ca{No, it
        will not be able to call. I tried to explain the connection.}}\ch{Where
      exactly is the connection between \ls{I}$^S.\psi$ and $\Sigma$/$\Pi$ explained?}
\ca{The connection is explained in:
  "this property characterizes both the behavior of the web sever and of the request handler.",
  "the behavior of the partial program is verified, including on how and if it calls the context,",
  "the access control policy and the checks give the correct specification to the context because of the type constraint"
  There is no direct answer to your question, but I think the info is there.
  Not sure how to answer your question in text.}%
In our source language, one can verify
that the behavior of the partial program
together with the context satisfy a global trace property.
One can do this by encoding the global trace property into the post-condition
of who has the initial control, so for now the partial program
(the dual setting is explained in \autoref{sec:dual-setting}).
For example, we verified that the web server ``responds to every request'' by encoding
this property in the post-condition of the web server, and this property characterizes both
the behavior of the web sever and of the request handler. Since the web server has the
initial control, its post-condition naturally becomes the post-condition of the whole
program.

We prove in \fstar{} that the global trace property verified in the source also holds
in the target because of the dynamic checks added by \sciostar{}. Thus, we show
that the added higher-order contracts and the enforced access control policy soundly enforce
the global trace property.

When the partial program has the initial control, its post-condition
(denoted by \ls{I}$^S.\psi$ in \autoref{fig:model_sec_comp})
is the global trace property and we show that the property holds after compilation
and linking the compiled partial program against any target context.

\begin{theorem}[Soundness]
\label{thm:soundness}
$\forall I^S.\ \forall P:\mathsf{prog}^S\ I^S.\ \forall C^T:\mathsf{ctx}^T\ \cmp{I^S}.\ \mathsf{beh}(C^T[\cmp{P}])\subseteq I\overset{S}{.}\psi$
\end{theorem}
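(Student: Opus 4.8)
The plan is to prove \autoref{thm:soundness} essentially ``by typing'', in the modular, SMT-assisted style the paper advocates. I would show that, once $\mathsf{link}^T$ and $\mathsf{compile\_prog}$ are unfolded, the term $C^T[\cmp{P}]$ can be re-assigned the \emph{stronger} computation type $\mathsf{unit} \to \mathsf{MIO}\ \mathsf{int}\ \mathsf{AllOps}\ \top\ I\overset{S}{.}\psi$ (a subtype of $\mathsf{whole}^T$, whose declared post-condition is only $\top$); the inclusion $\mathsf{beh}(C^T[\cmp{P}])\subseteq I\overset{S}{.}\psi$ then follows extrinsically from the Dijkstra-monad representation, exactly as outlined in \autoref{sec:beh}. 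The point is that, although $\mathsf{prog}^T$ advertises a trivial post-condition, $\mathsf{reify}$ exposes the underlying monadic term --- which is literally $P\ \mathsf{AllOps}$ applied to a monitored context --- and that term does satisfy $I\overset{S}{.}\psi$, because $P\ \mathsf{AllOps}$ internally has that post-condition.

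For the retyping I would proceed as follows. Unfolding the two definitions from \autoref{fig:model_sec_comp}, $C^T[\cmp{P}]$ is $P\ \mathsf{AllOps}$ applied to $\mathsf{import}$ of $C^T\ \mathsf{AllOps}\,(\mathsf{enforce\_policy}\ \mathsf{call\_io}\ I\overset{S}{.}\Pi)$ together with the effectful checks $\mathsf{make\_checks\_eff}\ I\overset{S}{.}\mathsf{cks}$. The crux is that this argument to $P\ \mathsf{AllOps}$ has type $I\overset{S}{.}\mathsf{ctype}\ \mathsf{AllOps}$, which I would establish in four steps: (i) $\mathsf{enforce\_policy}\ \mathsf{call\_io}\ I\overset{S}{.}\Pi$ has type $\mathsf{io\_lib}\ \mathsf{AllOps}\ I\overset{S}{.}\Sigma$, since $I\overset{S}{.}\Pi : \mathsf{policy}\ I\overset{S}{.}\Sigma$ and the post-condition of $\mathsf{enforce\_policy}$ is precisely $\mathsf{enforced\_locally}\ I\overset{S}{.}\Sigma$; (ii) therefore $C^T\ \mathsf{AllOps}\,(\ldots)$ has type $(\cmp{I^S}).\mathsf{ctype}\ \mathsf{AllOps}\ I\overset{S}{.}\Sigma$, which unfolds by the definition of $\mathsf{compile\_interface}$ to the intermediate type $(I\overset{S}{.}\mathsf{importable\_ctype}\ \mathsf{AllOps}).\mathsf{ityp}$ --- the single point where it matters that $C^T$ is typed against the \emph{compiled} interface $\cmp{I^S}$; (iii) the interface field $I\overset{S}{.}\mathsf{importable\_ctype}$ supplies an $\mathsf{importable\_to}$ instance whose $\mathsf{import}$, fed this intermediate-typed context and the effectful checks, yields an inhabitant of $I\overset{S}{.}\mathsf{ctype}\ \mathsf{AllOps}$ (for a function-typed context $\mathsf{import}$ never fails outright, the potential errors being pushed into each higher-order boundary crossing, cf.\ \autoref{sec:export-import-fun}); (iv) from $P : \mathsf{prog}^S\ I^S$, instantiating the flag with $\mathsf{AllOps}$ (so $\mathsf{AllOps}+\mathsf{IOOps}=\mathsf{AllOps}$) gives $P\ \mathsf{AllOps} : I\overset{S}{.}\mathsf{ctype}\ \mathsf{AllOps} \to \mathsf{unit} \to \mathsf{MIO}\ \mathsf{int}\ \mathsf{AllOps}\ \top\ I\overset{S}{.}\psi$, so the whole term acquires the claimed type. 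I expect this retyping to be essentially the entire difficulty; in \fstar{} the side conditions attached to the $\mathsf{importable\_to}$ instance (the $\mathsf{c\_pre}$, $\mathsf{c1\_post}$, $\mathsf{c2\_post}$, $\mathsf{c\_post}$ constraints of \autoref{sec:contracts}) are exactly what make the access-control guarantee of $\mathsf{enforce\_policy}$ together with the checks $I\overset{S}{.}\mathsf{cks}$ imply the source context type, and those are discharged by SMT.

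Given the stronger type, the conclusion is the standard ``post-condition becomes an extrinsic trace property'' argument. Let $w = \mathsf{reify}(C^T[\cmp{P}])$; from the Dijkstra-monad representation (a refinement of the free monad by $\mathsf{theta}\ {\cdot}\sqsubseteq{\cdot}$) the stronger type gives that $\mathsf{theta}\ w\ q\ h$ holds for every post-predicate $q$ and history $h$ with $\forall\,\mathit{lt}\,r.\; I\overset{S}{.}\psi\ h\ r\ \mathit{lt} \Rightarrow q\ \mathit{lt}\ r$. Now take any $(\mathit{lt},\mathit{res})\in\mathsf{beh}(C^T[\cmp{P}])$, i.e.\ $\forall q.\ \mathsf{theta}\ w\ q\ [] \Rightarrow q\ \mathit{lt}\ \mathit{res}$, and instantiate $q := \lambda\,\mathit{lt}'\,r'.\; I\overset{S}{.}\psi\ []\ r'\ \mathit{lt}'$; the required premise at $h=[]$ is reflexively true, so $\mathsf{theta}\ w\ q\ []$ holds, whence $I\overset{S}{.}\psi\ []\ \mathit{res}\ \mathit{lt}$ --- which is exactly $\mathsf{beh}(C^T[\cmp{P}])\subseteq I\overset{S}{.}\psi$ by the definition of $\subseteq$ on trace properties. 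So the only real obstacle is the typing of the monitored, imported context; everything downstream of it is routine.
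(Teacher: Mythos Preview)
Your proposal is correct and follows essentially the same approach as the paper's own proof: unfold linking and compilation, observe that the resulting term is $P\ \mathsf{AllOps}$ applied to an imported context and hence carries the stronger post-condition $I\overset{S}{.}\psi$, and then lift that post-condition to a trace-property inclusion via the Dijkstra-monad representation. You supply more detail than the paper's sketch---in particular the four-step typing argument for the imported context and the explicit $\mathsf{theta}/\mathsf{beh}$ unfolding---but the structure is the same.
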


\begin{proof}[Proof sketch.]
We unfold the compilation and the target linking and obtain that
\[C^T[\cmp{P}]\ \mathsf{==}\ P\ \mathsf{AllOps}\ (\mathsf{import}\ (C^T\ \mathsf{AllOps}\ (\mathsf{enforce\_policy}\ \mathsf{call\_io}\ I\overset{T}{.}\Pi))\ (\mathsf{make\_checks\_eff}\ I\overset{S}{.}\mathsf{cks})).\]
%
Before unfolding the term $C^T[\cmp{P}]$ has the weak type
$\mathsf{whole}^T=$~\ls{unit -> MIO int AllOps True} \ls{True},
but the term after unfolding is an instantiation of the source program $P$,
which has the stronger type \ls{unit -> MIO int AllOps True}$~I\overset{S}{.}\psi$.
%
%
From the term after unfolding, we can lift the
post-condition to a trace property (as explained in \autoref{sec:beh}) and have that
$\mathsf{beh}(P\ \mathsf{AllOps}\ (\mathsf{import}\ (C^T\ ...)\ ...)) \subseteq I\overset{S}{.}\psi$,
which implies that also $\mathsf{beh}(C^T[\cmp{P}])\subseteq I\overset{S}{.}\psi$.
\end{proof}

The proof follows so quickly because of the intrinsic specifications that are part of the
types of the source partial program and of the imported context.
Crucially, we lifted the
post-condition of the partial program to a extrinsic property about the whole program.
Less obvious is the role of the specification of the imported context
in the proof:
the behavior of the partial program is verified, including on how and if it calls the
context, based on the specification of the context. Thus, it is enough to pass to the
partial program a context with the correct specifications---\IE with the correct type.
The specifications are given to
the target context by instantiating the context with
the secure IO operations which enforce the access control policy, and by the
dynamic checks enforced by the \ls{import} function which is verified to be correct.
We know that the access control policy and the checks give
the correct specification to the context because of the type constraint
($I\overset{S}{.}\mathsf{importable\_ctype}$) between them.
So the proof of soundness is done modularly by typing
and also heavily benefits from SMT automation in \fstar{}~\cite{dm4all,mumon}.


\iflater
\subsection{Compiler correctness}
\ca{\citet{FindlerF02} have a compilation correctness theorem, is it related?}
\fi

\subsection{Robust Relational Hyperproperty Preservation (RrHP)}
\label{sec:rrhp}

We prove that \sciostar{} robustly preserves relational hyperproperties,
which is the strongest secure compilation criterion of \citet{AbateBGHPT19},
and in particular stronger than full
abstraction\ifallcites~\cite{MarcosSurvey}\fi, as proved by \citet{AbateBGHPT19} for a
determinate setting with IO that closely matches ours.
Relational hyperproperties~\cite{AbateBGHPT19} are a very broad class of
security properties that includes trace properties (such as safety and
liveness), hyperproperties~\cite{clarkson10hyp} (such as noninterference), and
also properties that relate the behaviors of multiple programs (such as trace equivalence).
Robust Relational Hyperproperty Preservation (RrHP) states that for any
relational hyperproperty that a collection of source programs robustly
satisfies---\IE the programs satisfy this relational hyperproperty when each of them is linked
with the same arbitrary source context---then the compilation of these programs will also
robustly satisfy the same relational hyperproperty with respect to an arbitrary
target context.
Intuitively, in order to achieve such a strong criterion, the various parts of
the secure compilation framework have to work together to provide enough protection
to the compiled programs so that a linked target context doesn't have more attack
power than a source context would have against the original source programs.


\newtext{
One challenge in stating RrHP in our setting is that, as explained in
\autoref{sec:key-ideas}, \sciostar{} allows the program and the context to
recover from contract and monitoring errors.
This is necessary for functionality, as it would be unacceptable for our web
server to crash whenever a dynamic check fails.
This does trade off some security though, because the results of these dynamic checks
can potentially be used by the target context to mount some attacks that rely on
the information they indirectly reveal about the program and its secrets.
Our RrHP theorem allows this information flow that improves functionality
because we align the attack capabilities of the source and the target contexts,\footnote{
  Aligning the attack capabilities of the source and target contexts is a common necessity
  in secure compilation~\cite{AbateBGHPT19, MarcosSurvey}.}
so that the recoverable errors caused by enforcement are possible in both contexts.}
%
%
We achieve this by also explicitly passing the effectful checks to the source
context, which enables the source context to mount the same kind of attacks as
the target context.

To state RrHP, we present the definitions of source contexts 
and source linking from \autoref{fig:model_sec_comp}.
The definition of source contexts (\ls{ctx}$^S$) differs from the definition of target contexts (\ls{ctx}$^T$)
in two ways. First, the type of source contexts is indexed by a source interface (\ls{I}$^S$) instead of a target interface (\ls{I}$^T$).
Second, a source context has an extra argument for the effectful checks that we
pass to it in order to align the attack capabilities of the source and target contexts.
We pass the effectful checks to the source context during source linking
(see \ls{link}$^S$ in \autoref{fig:model_sec_comp}),
in contrast to what happens at the target level where the compiled partial program is the
one that passes the effectful checks to the imported target context.
Finally, the source context also has to be parametric in the flag because it should not
have access to the \ls{get_mstate} operation, since otherwise it would be able to
distinguish between different source programs and their secrets just by looking
at the trace, which would mean that source programs could robustly satisfy only
trace properties, but not hyperproperties, making the RrHP theorem much weaker.

%

We are now ready to state RrHP and we use a property-free characterization that
was proposed by \citet{AbateBGHPT19} and that is generally better tailored for proofs:

\begin{theorem}[Robust Relational Hyperproperty Preservation (RrHP)]
\label{thm:rrhp}
$$\forall I^S.\ \forall C^T:\mathsf{ctx}^T\ \cmp{I^S}.\ \exists C^S:\mathsf{ctx}^S\ I^S.\ \forall P:\mathsf{prog}^S\ I^S.\ \mathsf{beh}(C^T[\cmp{P}])	= \mathsf{beh}(C^S[P])$$
\end{theorem}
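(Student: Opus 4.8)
The plan is to use back-translation as the witness and reduce the goal to a purely syntactic identity. Given $I^S$ and a target context $C^T : \mathsf{ctx}^T\ \cmp{I^S}$, I would take $C^S := \bak{C^T}$, that is $\mathsf{back\_translate\_ctx}\ C^T$ from \autoref{fig:model_sec_comp}. First I would check that this is well-typed as an element of $\mathsf{ctx}^S\ I^S$: this holds because $\cmp{I^S}$ reuses the policy $\Pi$ and its specification $\Sigma$ of $I^S$ verbatim (by $\mathsf{compile\_interface}$), and because $\mathsf{back\_translate\_ctx}\ C^T$ unfolds to $\lambda \mathit{fl}\ \mathit{sec\_io}.\ \mathsf{import}\ (C^T\ \mathit{fl}\ \mathit{sec\_io})$, which is flag-polymorphic and never uses $\mathsf{get\_mstate}$ directly — it only touches it through the effectful checks that are passed in later — exactly as the type $\mathsf{ctx}^S$ requires. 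This is where the effectful-checks design of \autoref{sec:exportable-importable} pays off: it is what makes $\mathsf{import}$ applicable to a flag-polymorphic context without collapsing the polymorphism.

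Next I would fix an arbitrary $P : \mathsf{prog}^S\ I^S$ and unfold both sides. On the left, unfolding $\mathsf{link}^T$, the definition of $\cmp{P} = \mathsf{compile\_prog}\ P$, and the equalities $(\cmp{I^S})\overset{T}{.}\Pi = I^S\overset{S}{.}\Pi$, $(\cmp{I^S})\overset{T}{.}\Sigma = I^S\overset{S}{.}\Sigma$ gives
\[ C^T[\cmp{P}]\ \mathsf{==}\ P\ \mathsf{AllOps}\ (\mathsf{import}\ (C^T\ \mathsf{AllOps}\ (\mathsf{enforce\_policy}\ \mathsf{call\_io}\ I^S\overset{S}{.}\Pi))\ (\mathsf{make\_checks\_eff}\ I^S\overset{S}{.}\mathsf{cks})). \]
On the right, unfolding $\mathsf{link}^S$ and $\bak{C^T}$ shows that the computational component of $C^S[P]$ — the second projection of the dependent pair $(|\,I^S\overset{S}{.}\psi,\ \ldots\,|)$ — is the very same $\mathsf{MIO}$ term. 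Since $\mathsf{beh}$ on a whole source program is $\mathsf{beh}$ of its computational component, and $\mathsf{beh}$ is a function of its argument, we get $\mathsf{beh}(C^T[\cmp{P}]) = \mathsf{beh}(C^S[P])$; in fact the equality holds by reflexivity, because the underlying $\mathsf{m}\ \mathsf{int}$ terms obtained after $\mathsf{reify}$ coincide. This is precisely the ``syntactic inversion law'' advertised in \autoref{sec:intro}, and it discharges RrHP with no relational or logical-relations machinery.

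The step I expect to require the most care is not the proof itself but confirming that the syntactic identity holds \emph{on the nose}, which rests on three alignments baked into the framework: the compiler does nothing to $P$ beyond wrapping its context argument with $\mathsf{import}$ and instantiating the flag to $\mathsf{AllOps}$; $\mathsf{link}^T$ instantiates $C^T$ at $\mathsf{AllOps}$ with exactly $\mathsf{enforce\_policy}\ \mathsf{call\_io}\ \Pi$; and $\mathsf{link}^S$ feeds the source context the matching $\mathsf{make\_checks\_eff}\ \mathsf{cks}$ at the same flag. If any of these were off — e.g.\ if $\mathsf{import}$ needed direct access to $\mathsf{get\_mstate}$, or if the source context were not flag-polymorphic, or if compilation inserted additional wrappers — the two sides would fail to $\beta$-reduce to the same term, and one would be forced into a genuine (much harder) behavioral equivalence argument. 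Given these alignments, the obstacle evaporates: after $\beta$-reduction the two sides are definitionally equal, so the equality of behaviors, and hence equality of the induced trace properties, is immediate.
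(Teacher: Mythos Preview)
Your proposal is correct and follows essentially the same approach as the paper: instantiate the existential with the back-translation $\bak{C^T}$, unfold compilation, back-translation, and both linkers to obtain the syntactic inversion law $C^T[\cmp{P}] = \bak{C^T}[P]$, and conclude equality of behaviors immediately. Your additional discussion of why the back-translated context type-checks at $\mathsf{ctx}^S\ I^S$ and of the three alignments that make the two sides definitionally equal is more explicit than the paper's sketch but entirely in line with it.
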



\begin{proof}[Proof sketch]
Looking at the quantifier structure, to prove this theorem
one has to create a source context by defining a back-translation
that only takes the target context as argument. In our case, we can define back-translation by
partially applying \ls{import}
(see \ls{back_translate_ctx} in \autoref{fig:model_sec_comp}), making it
very similar to what compilation does to the target context.
%
These definitions of compilation, back-translation, and source linking allow us
to prove that compiling the program and linking it with the context is
syntactically equal to back-translating the context and linking it with the program:
$\forall I^S.~\forall C^T{:}\mathsf{ctx}^T\ \cmp{I^S}.~\forall P{:}\mathsf{prog}^S\ I^S.~\ C^T[\cmp{P}] = \bak{C^T}[P]$.
This syntactic inversion law is proved by just unfolding the definitions
and makes the proof of the RrHP criterion immediate.
\end{proof}

While RrHP is the strongest secure compilation criterion of \citet{AbateBGHPT19},
and such criteria are generally challenging
to prove~\cite{MarcosSurvey, AbateBGHPT19, DevriesePPK17, NewBA16, JacobsDT22},
we have set things up so that it holds by construction,
so our proof is easy.
%
This simplicity is possible because (1) our languages are shallowly embedded
in \fstar{}; (2) we use flag-based effect polymorphism to model the context;
and (3) we design our higher-order contracts mechanism so that we can define
both compilation and back-translation. \newtext{These elements allow us to
avoid a sophisticated logical relations argument~\cite{AbateBGHPT19,
 DevriesePPK17, NewBA16} by instead proving a
syntactic inversion law relating not only compilation
and back-translation, but also source and target linking.
Target linking is important here, because it adds the checks that
enforce the access control policy on the IO operations of the context.
}

\newtext{
Defining the back-translation function is often the most interesting part of
the proof of such secure compilation criteria\ifallcites~\cite{MarcosSurvey, AbateBGHPT19,
  DevriesePPK17, NewBA16, JacobsDT22}\fi.
Also in our setting it is non-trivial to define back-translation:
we had to redesign the higher-order contracts mechanism (as explained
in \autoref{sec:exportable-importable}) so that back-translation 
results in source contexts that are flag-based effect polymorphic.}

\subsection{Dual Setting: the context has initial control}
\label{sec:dual-setting}

In previous subsections we formally defined the \sciostar{} framework in the
setting when the partial program has the initial control and uses the context as a
library. In this subsection, we show that \sciostar{} and its security theorems
also work in the dual setting---\IE
when the context has initial control and uses the partial program as a library. 
In this presentation we focus on the main differences compared to the previous
setting (\autoref{fig:model_sec_comp}).
At the level of the interfaces, we replace the type of the context (\ls{ctype}) with
a type for the partial program (\ls{ptype}) and redefine the notions of partial program and
context so that the context takes the partial program as argument:
\begin{lstlisting}[lineskip=0pt]
type prog$^S$ I$^S$ = fl:erased tflag -> I$\overset{S}{.}$ptype (fl+IOOps)
type ctx$^T$ I$^T$ = fl:erased tflag -> $\Sigma'$:erased _ -> io_lib fl $\Sigma'$ -> I$\overset{T}{.}$ptype fl -> unit -> MIO int fl True$~\Sigma'$
\end{lstlisting}
The type of the partial program has to be exportable, since
we have to prepare the partial program by exporting it before passing it to the context.
The definition of compilation and back-translation change so that instead of
importing the context, the partial program is exported.

For this dual setting, we had to redefine the soundness theorem.
When the partial program has the initial control,
it is statically verified to satisfy its post-condition, thus
the soundness theorem guarantees that the resulting target whole program
also satisfies this post-condition.
When the unverified context has the initial control, however,
we only know that it is monitored, thus, the soundness theorem for this setting
guarantees instead that the resulting target whole program satisfies the
specification of the access control policy $\Sigma$.

\begin{theorem}[Soundness-Dual]
\label{thm:soundness-dual}
$\forall I^S.\ \forall P:\mathsf{prog}^S\ I^S.\ \forall C^T:\mathsf{ctx}^T\ \cmp{I^S}.\ \mathsf{beh}(C^T[\cmp{P}])\subseteq I\overset{S}{.}\Sigma$
\end{theorem}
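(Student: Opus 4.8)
\noindent
The plan is to mirror the proof of \autoref{thm:soundness}, the only conceptual change being that the post-condition we ultimately lift to a trace property comes from the monitored IO library rather than from the (now untrusted, initially-controlling) context. First I would unfold \ls{compile_prog} and \ls{link}$^T$ in their dual-setting definitions, so that
\[
  C^T[\cmp{P}]\ \mathsf{==}\ C^T\ \mathsf{AllOps}\ (\mathsf{enforce\_policy}\ \mathsf{call\_io}\ I\overset{T}{.}\Pi)\ (\mathsf{export}\ (\mathsf{make\_checks\_eff}\ I\overset{S}{.}\mathsf{cks})\ (P\ \mathsf{AllOps})),
\]
where $I^T = \cmp{I^S}$ (in the dual setting compilation exports the program instead of importing the context). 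Before unfolding, this term has the weak type $\mathsf{whole}^T =$ \ls{unit -> MIO int AllOps True True}; the crux is that after unfolding it is an application of the context $C^T$ whose result type is much stronger.

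Next I would establish, purely by typing, that this application has type \ls{unit -> MIO int AllOps True}~$I\overset{T}{.}\Sigma$, i.e.\ a \ls{MIO} computation whose post-condition forces \ls{enforced_locally}~$I\overset{T}{.}\Sigma$ on every produced local trace. Two facts feed into this. First, $\mathsf{export}\ (\mathsf{make\_checks\_eff}\ I\overset{S}{.}\mathsf{cks})\ (P\ \mathsf{AllOps})$ has type $I\overset{T}{.}\mathsf{ptype}\ \mathsf{AllOps}$: this is exactly what the \ls{exportable} type-class constraint carried by the dual source interface, together with the verified correctness of \ls{export}, gives us---playing the same role the correctness of \ls{import} plays in \autoref{thm:soundness}. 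Second, $\mathsf{enforce\_policy}\ \mathsf{call\_io}\ I\overset{T}{.}\Pi$ has type $\mathsf{io\_lib}\ \mathsf{AllOps}\ I\overset{T}{.}\Sigma$ by the signature of \ls{enforce_policy} (using $I\overset{T}{.}\Pi : \mathsf{policy}\ I\overset{T}{.}\Sigma$). Instantiating $C^T : \mathsf{ctx}^T\ I^T$ at flag \ls{AllOps} and at policy specification $\Sigma' = I\overset{T}{.}\Sigma$ then yields the claimed type.

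Finally I would lift this post-condition to a trace property exactly as in \autoref{sec:beh}, obtaining that every complete trace in $\mathsf{beh}(C^T[\cmp{P}])$ satisfies \ls{enforced_locally}~$I\overset{T}{.}\Sigma$; since $I^T = \cmp{I^S}$ and \ls{compile_interface} sets $I\overset{T}{.}\Sigma = I\overset{S}{.}\Sigma$, this is precisely $\mathsf{beh}(C^T[\cmp{P}])\subseteq I\overset{S}{.}\Sigma$, as required. As in \autoref{thm:soundness} the argument is modular, done by typing, and leans on \fstar{}'s SMT automation to discharge the remaining obligations.

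The step I expect to be the main obstacle is the middle one---showing that the unfolded whole program really does inhabit the strong type \ls{unit -> MIO int AllOps True}~$I\overset{T}{.}\Sigma$, which amounts to the claim that a flag-polymorphic context handed \emph{only} the monitored IO library cannot produce IO behavior outside $\Sigma$. This is never argued directly: it has to be already baked into the post-condition of \ls{io_lib} and the refinement on the result of \ls{enforce_policy}, so that, thanks to flag-based effect polymorphism preventing direct access to the IO operations and to \ls{get_mstate}, the obligation collapses to a typing check. The contrast with \autoref{thm:soundness} is thus conceptual rather than technical: because the untrusted context now holds initial control, its own post-condition is unavailable, and the only specification surviving on the whole program is the access-control specification $\Sigma$ enforced by the secure IO library---which is exactly why the conclusion weakens from $I\overset{S}{.}\psi$ to $I\overset{S}{.}\Sigma$.
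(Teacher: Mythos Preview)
Your proposal is correct and follows essentially the same approach as the paper, which simply states that the theorem is proved ``using the same strategy as for \autoref{thm:soundness}.'' Your unfolding, the typing argument via \ls{export} and the instantiation of the flag-polymorphic context at $\Sigma' = I\overset{T}{.}\Sigma$, and the final lift of the post-condition to a trace property all match what that strategy entails in the dual setting.
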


We proved this soundness theorem using the same strategy as for \autoref{thm:soundness}.
We also proved the RrHP theorem in this dual setting using the same strategy as
for \autoref{thm:rrhp}, since the same syntactic inversion law also holds in this setting.
All proofs have been machine-checked in \fstar{}.
The artifact also contains instantiations of this dual setting with some examples (one presented in \autoref{sec:other-examples}).

\subsection{Syntactic representation of target contexts}
\label{sec:syntactic-contexts}

In \autoref{sec:compilation-framework} we defined target contexts as
flag-polymorphic functions taking $\Sigma'$ and \ls{io_lib} as arguments:

\begin{lstlisting}[lineskip=0pt]
type ctx$^T$ I$^T$ = fl:erased tflag -> #$\Sigma'$:erased policy_spec -> io_lib fl $\Sigma'$ -> I$\overset{T}{.}$ctype fl $\Sigma'$
\end{lstlisting}
These functions are good enough to represent general contexts that
make calls to the IO operations provided by the \ls{io_lib}
argument.
However, this definition of contexts might not be entirely transparent
about the expressive capabilities of the target contexts and the fact
that they are unverified. To make these more apparent, we also introduce a
syntactic representation of target contexts.
For this, we define a small deeply embedded language and a total translation
function taking an arbitrary syntactic expression in this language and
translating it to a function of type \ls{ctx}$^T$\ls{ I}$^T$ representing a
target context.
Our deeply embedded language is a simply-typed lambda calculus with primitive
types (such as \ls{bytes}, \ls{int}, etc.), and we wrote the adversarial
handlers of \autoref{sec:key-ideas} also as syntactic expressions in this
language, and then translated those expressions to obtain handlers that behave,
when executed, as the original ones do\ifanon\else{} (\autoref{sec:running-case-study})\fi.
We have reproved Theorems \ref{thm:soundness} and \ref{thm:rrhp} also when
the target context is a syntactic expression, in which case target linking and back-translation
do the extra step of first translating this expression into a context with a weak interface.

\newtext{
\section{Running the case study in OCaml}
\label{sec:running-case-study}
%
%
We run our main case study by extracting both the compiled web server and 
the handlers from our target language (a subset of \fstar{})
to OCaml using the standard extraction mechanism of \fstar{} and 
test that they execute as intended when linked with wrapped up versions of the 
realistic IO library of OCaml, which reads and writes to files and network 
sockets.
While securing and formally verifying this extra step going from our target
language to OCaml is out of scope for this work, this experiment still offers
empirical evidence that several attacks attempted by the adversarial request
handlers written in our target language are blocked by the dynamic checks of
either the higher-order contracts or the reference monitor, even at the OCaml level.
Therefore, the interesting compilation step we secure and verify here works as
expected also after further extraction to OCaml, and this should provide a good base
for the more challenging task of building a
larger secure compilation framework that protects verified programs against
arbitrary contexts written in a safe subset of OCaml (\autoref{sec:conclusion}).
The web server was linked against the adversarial handlers from \autoref{sec:key-ideas}
and a non-adversarial handler that responds to HTTP requests from a real browser.

To run the web server, we implemented the recording of IO
operations in the monitor state in \fstar{}
using a variant of the state effect that extracts to native OCaml references.
This allowed us to verify that the individual updates of the monitor
state are done correctly.
Our implementation is parametric in
an initial state \ls{init_st} that abstracts the empty trace,
and a verified \ls{upd_st} function
that, given a monitor state and an event, produces a new state
that abstracts the new trace.
} 
\begin{lstlisting}
type init_st (mst:mstate) = s:mst.typ{s `mst.abstracts` []}
type upd_st mst = s_0:mst.typ -> e:event -> s_1:mst.typ{forall h. s_0 `mst.abstracts` h ==> s_1 `mst.abstracts` e::h}
\end{lstlisting}

\iflater
\tw{I think it would be good to also have a non malicious handler that can 
sometimes perform (or attempt) illegal operations. For instance the request 
could be a file to read from and the handler would be not safe enough because
it wouldn't check that the file is indeed in tmp. This way we can illustrate how
a handler behaves when it sometimes works incorrectly but not always.}
\fi


\newtext{
\section{Other classes of examples}
\label{sec:other-examples}

In this section, we illustrate the applicability of \sciostar{}
by presenting three more examples:
\begin{inlist}
  \item the dual setting of \autoref{sec:dual-setting} where the
    context now has the initial control;
  \item a higher-order context that returns a callback;
  \item different instantiations of the monitor state, including a stateless monitor.
\end{inlist}

\paragraph{1) Dual setting.} Our logging IO library prints to the console 
all the IO requests and guarantees that no IO requests from
the context can happen without being logged.
In this example, the context has the initial control and it gets the
partial program as argument, where the partial program is a logging
function that writes to the console its arguments.
To obtain our desired guarantees, we pick a specification that
states that each IO operation done by the context is preceded by a logging operation
done by the program, and each logging operation must be followed by an IO operation
done by the context. We encode this as the specification
of the access control policy:
}
\begin{lstlisting}
let $\Sigma$ h caller op arg : policy_spec = match caller, op with
  | Ctx, _ -> h <> [] /\ hd h == EWrite Prog (stdout, to_string op) (Inl ())
  | Prog, Write -> h == [] \/ get_caller (hd h) == Ctx
  | _ -> False
\end{lstlisting}
\newtext{
Therefore, the context is forced to call the program before being able to do
any IO operations.
We instantiated the \sciostar{} framework with this example, thus,
thanks to \autoref{thm:soundness-dual}, the resulting
target whole program satisfies the following specification: no IO request
from the context can happen without being logged. 

\paragraph{2) Archiving library.} In this example, the partial program has
the initial control and uses an untrusted higher-order function (of type \ls{zip} below) to archive files.
The untrusted function takes as first argument the file descriptor of the archive,
writes the main header of the archive to it,
and then returns another function (of type \ls{zip_file} below) that the program has to call
to add files into the archive.
The specification of the archiving function guarantees that it only reads and writes to
the files opened by the partial program.
}
\begin{lstlisting}
type zip_file = fd:file_descr -> MIO (either unit err) mymst (requires (fun h -> is_open fd h))
                               (ensures (fun _ _ lt -> only_reads_and_writes_to_fds_opened_by_prog lt))
type zip = afd:file_descr -> MIO (either zip_file err) mymst (requires (fun h -> is_open afd h))
                               (ensures (fun _ _ lt -> only_reads_and_writes_to_fds_opened_by_prog lt))
\end{lstlisting}
\newtext{
This highlights that \sciostar{} fully supports higher-order contexts,
including ones returning functions.

\paragraph{3) The monitor state.}
Our artifact contains multiple examples that show different
instantiations of the monitor state, in addition to the one of the web server from 
\autoref{sec:webserver-state}. The archiving library from 2 has as state
the entire trace of past IO events. The IO logging library example has as monitor state
only the last event that happened. We also have an example where the
monitor is stateless.
These different examples showcase that the monitor state can be chosen flexibly.
}

\section{Related work}
\label{sec:related-work}

\paragraph{Secure compilation.}
There are many approaches to secure compilation~\cite{MarcosSurvey,AbateBGHPT19},
but to our knowledge, 
the only ones to support secure compilation of {\em formally verified programs
  against unverified contexts} are those of \citet{AgtenJP15} and \citet{StrydonckPD21}.
They protect programs
verified with separation logic against adversarial contexts using protected
module architectures~\cite{AgtenJP15, AgtenSJP12, PatrignaniASJCP15} or
linear capabilities~\cite{SkorstengaardDB21}.
While they 
focus on stateful code and prove full abstraction,
\sciostar{} focuses on code that can perform IO, and we establish RrHP with 
 machine-checked proofs in \fstar{}.

\iflater
\ca{At a first watch, \citet{StrydonckPD21} seems to enforce only easy post-conditions.
  They have this example of a context (see Figure 2)
  that has access to a pointer to an array, and the post-condition of the context specifies that the array should have the value \ls{[0]}.
  To enforce the post-condition, they wrap the context in a new function, where at the end, they use the
  guard statement to enforce the post-condition, where ``guard gets stuck during execution if its condition evaluates to false''.
  So, this is a very simple post-condition to enforce. The fact that the context does not modify anything else
  comes from linear capabilities.}
\fi

%

\paragraph{Static verification of IO.}
There is a lot of work on statically verifying \emph{whole} IO programs~\cite{LetanR20,DBLP:journals/jsc/MalechaMW11,DBLP:conf/esop/Penninckx0P15,DBLP:journals/corr/abs-1901-10541,DBLP:conf/nfm/JacobsSPVPP11,AmanPohjolaRM19,FereePKOMH18,GueneauMKN17}.
Interaction trees~\cite{DBLP:journals/pacmpl/XiaZHHMPZ20} were used to
define a program logic using a monad morphism in the style of Dijkstra
monads~\cite{SilverZ21}
to verify non-terminating impure computations in Coq. A case study verifies an HTTP Key-Value Server~\cite{DBLP:conf/itp/ZhangHK0LXBMPZ21} that is part of the verified operating system CertiKOS~\cite{certikos}.
The web server is written in C and the trace properties are verified in Coq, requiring the manual application of tactics to prove verification goals. \sciostar{} simplifies this kind of use cases by taking advantage of SMT
automation, yet extending the \ls{MIO} monadic effect with non-termination is future work.
Finally, all this line of related work focuses on how to verify whole programs, and does not address the problem of secure compilation.

\paragraph{Dependent interoperability.}
Strong interfaces in \sciostar{} contain refinement types and pre- and post-conditions that
can depend on function arguments and results.
Converting refinement types into dynamic checks is inspired by
\citet{TanterT15}, who introduce a mechanism based on type classes.
We extend this idea to convert pre- and post-conditions, and to go beyond pure
functions, addressing new challenges. 
%
%
We see \sciostar{} as a first step towards achieving secure compilation from
\fstar to OCaml (\autoref{sec:conclusion}).
The next steps could be build upon work on {\em dependent interoperability}~\cite{OseraSZ12,DagandTT18}.


\iflater
    \paragraph{Correct compilation.}\ch{Don't think we should have a heading on correct
    compilation, since we don't prove that, it's super broad, and it's a lot less
    related than secure compilation. If there is anything actually relevant below
    please try to move it elsewhere (maybe new more specific heading).}
    \ch{In any case please explain clearly what the connection is.}

    In our paper, we focused only on the interface through which the partial program
    and the context communicate and we did not discuss a proper compilation step.

    \ca{Mention compcert too?}\ch{Not related!}

    \ca{CakeML: compilation of pure total programs \url{https://cakeml.org/icfp16.pdf}\ch{Not related!}
    and more recent that includes references, IO and exceptions \url{https://cakeml.org/ijcar18.pdf}}\ch{Maybe
    related to static IO verification? They seem have more related work in that space though (we cited it at HOPE)}
\fi

\paragraph{Higher-order contracts.} 
\citet{FindlerF02} pioneered higher-order contracts, now a standard feature of Racket~\cite[Chapter 8]{Racket}.
Several works have explored extensions to stateful contracts, \EG \citet{DisneyFM11}
propose temporal higher-order contracts, \citet{ScholliersTM15} propose computational contracts,
and \citet{TovP10} study stateful contracts for affine types.
Stateful contracts can be added at the boundary between
the partial program and the context, and they can be used to implement an IO
reference monitor.
%
In particular, \citet{MooreDFFC16} propose authorization
contracts for implementing access control monitors for software components.
%
%
The key novelty of \sciostar{}
is to integrate
statically-verified code with untrusted code in a tool that is itself verified.
%
%
It would be interesting to explore whether soft contract
verification~\citeFull{NguyenTH14}{NguyenGTH18, MoyNTH21} could be used in \sciostar{} to
eliminate dynamic checks that can be verified statically.

\iflater
\ca{\citet{NguyenGTH18} do soft contract verification for higher-order stateful program,
  but it is not clear if they tried to store information in the contract that 
  can be later used when enforcing the contract. aka keeping track of
  the IO operations that happened. Also, their technique requires the whole program
  and also they have to manually inspect false positives.}

\ch{I somehow have a vague memory that in Racket(?) they have ways to only
  put/check contracts on component boundaries? Should try to dig that up.}\ca{You
  mean that trusted code by-passes the contract while untrusted code calls the
  function with the contract?}\ch{I mean that calls within the same module don't get
  checked, while calls to other modules get checked.}
\fi

\paragraph{Runtime verification.}
There is extensive prior work on how to monitor program executions either by using runtime verification and instrumentation.
%
Java-MOP is a framework for Monitoring-Oriented Programming (MOP) that builds on Aspect-Oriented Programming (AOP) in Java~\cite{Chen2004,Chen2005,jin-meredith-lee-rosu-2012-icse}.
\newremove{The partial program and the context run inside the Java Virtual Machine.
Monitoring is obtained via
instrumentation of the whole program, synthesized from formal specifications.
MOP supports dynamic checks on module boundaries, but only for first-order interfaces.
There is no mechanized proof of soundness and no secure compilation criterion is considered.
}\newtext{
They focus on synthesizing the monitor from formal specifications (\EG in LTL) to instrument
the whole program. Automatically synthesizing the dynamic checks is an open
challenge in our setting. In this paper, we focus instead on highly
expressive specifications, on the programmability and efficiency of the checks,
and on obtaining formal security guarantees, by automatically verifying that the
checks are always enough to guarantee the specifications in the strong interface.}

\newtext{Our work currently only deals with a single policy. Prior work on
flow-based monitors---intensively studied in the AOP literature from both the
points of view of expressiveness~\cite{Tanter08,dfs04a} and
efficiency~\cite{MasuharaKD03,AvgustinovTM07,BoddenHL07}---could be
helpful to extend our work to support multiple, localized policies for
specific contexts and linking points.}
%
%
%

%
%

\iflater
\ch{[Moved from PriSC, but still unclear where this belong (if anywhere)]
  The big advantage of static verification over dynamic verification is
  getting static verification errors and much stronger guarantees once things
  verify, instead of runtime errors.}\ch{Looking at the PriSC reviews the relative
  advantages of static vs dynamic verification could be discussed here.
  One more advantage of static verification is that it doesn't add runtime
  overhead, while dynamic verification does.}
\fi

\paragraph{Gradual verification.}
\citet{BaderAT18} and \citet{WiseBWATS20}
propose gradual program verification to easily combine dynamic and static
verification in the same language at a very fine granularity, using variants of Hoare logic or separation logic with {\em imprecise} logical formulas. \sciostar{} is a hybrid verification and compilation framework with a coarser interoperability granularity (program vs. context), and targets the IO effect.
%

\iffull
\paragraph{Reasoning about robust safety.}
%
Interoperability between trusted and untrusted code has also been studied when
reasoning about robust safety~\cite{KupfermanV99, GordonJ04, SwaseyGD17}.\ca{explain how this is related
  with RrHP}\et{+1!}\et{also add ref for robust safety?}\ch{added standard refs}
\citet{SammlerGDL20} show that sandboxing enables reasoning about robust safety
when having a rich type system that contains the {\tt Any} type, a type inhabited by all values, and a higher-order contract mechanism going between types and {\tt Any}.
A main difference with \sciostar{} is that in their work all interactions between the trusted and the untrusted code happen through the {\tt Any} type. Also, they discuss only robust safety related to the
memory model, not trace properties.
\fi

\iflater
\ca{
\paragraph{Extraction/compilation from proof assistants}

The need to extract/compile code written and verified in proof assistants has multiple examples of compilers from Coq or F* to OCaml, C, etc. This show that exists a need, and
most of the work does not contain any secure compilation criteria.
\ca{cite Low*}\ch{Now we already cite these in the intro}

\citet{KoronkevichRAB22}
present a dependent-type-preserving translation to A-normal form (ANF)
for a subset of Coq. \tw{They compile to a language which doesn't have defined
operational semantics though: Coq extended with equality reflection.}
A different more practical approach is taken by Yannick et al.
that try to create a correct compiler for Coq programs that target
the untyped Malfunction language (one of the intermediate languages of the OCaml
compilation framework).
}
\fi


\section{Conclusions and Future Work}
\label{sec:conclusion}
%
\newremove{
In this paper we were the first to secure verified IO programs against unverified adversarial
code, which we achieved by building \sciostar{}, a secure compilation
framework that we implemented and formally verified in \fstar{}.}
\newtext{
Securely compiling verified code and linking it with adversarial unverified code
is a general open research problem for all proof-oriented programming languages
(Coq, Isabelle/HOL, Dafny, etc). In this paper we make substantial progress
towards solving this important problem by proposing a formally secure
compilation framework for protecting verified IO programs against unverified
code. In particular, we provide machine-checked proofs that our framework
soundly enforces a global trace property and, moreover, satisfies a secure
compilation criterion called RrHP, which is stronger than full abstraction.}

We see this as an important first step towards building a larger formally secure
compilation framework from \fstar{} to a safe subset of OCaml.
Protecting compiled programs against arbitrary OCaml contexts is challenging,
and would go beyond the state of the art in formally secure compilation, which
has so far never been achieved for such realistic languages, but we believe that
our current work can serve as a good base for that.
The side effects of OCaml will need to be exposed in the source language,
which requires a significant extension of the framework,
yet our free monad representation for MIO is general enough to allow expressing
more of the side effects of OCaml such as non-termination,
exceptions,\footnote{\newtext{Another interesting idea one could investigate is
    preventing the context from catching exceptions raised by our enforcement
    mechanism, so as to reduce the amount of information these errors can reveal
    to the context and thus to further strengthen our secure compilation theorem
    (\autoref{sec:rrhp}).}}
and state.
The secure compilation proof techniques for shallowly embedded languages we
designed in this paper will have to be combined with proof techniques for deeply
embedded languages~\cite{MarcosSurvey,AbateBGHPT19}, and the security
enforcement and proof will have to be extended to include \fstar{}'s extraction
to OCaml, which is similar to that of Coq~\cite{Let2008}.
The hope is to be able to reuse the recent correctness proof of Coq
extraction~\cite{SozeauBFTW20, sozeau:hal-04077552} in a bigger secure
compilation proof~\cite{AbateABEFHLPST18,El-Korashy0PD0P21}, and our strategy from
\autoref{sec:syntactic-contexts} for defining the back-translation.

\iflater
\ch{Do we want to mention
  here that adding exceptions would also allow us to have a less clunky
  treatment of contract failures? See discussion about this and its connection
  to blame at the end of 4.0}\ch{Moved here:}
\gm{\bf There is no need for blame assignment: the context is the only one who can be at fault.}
\ca{Guido made this claim which I removed from the submission. I don't think we ever discussed about this,
  but it something that they always discuss in a higher-order contracts paper. What can we say about it?}
\ch{It's true that the context is always at fault, but I thought that blame is a
  bit more fine-grained than that, identifying the precise code/contract that caused
  the failure. One thing that's already clear: it will be a serious tradeoff between
  on the one side providing better error messages when things dynamically fail
  and allowing for recovery and on the other side leaking private information
  from the program to the context. Already making contract errors recoverable
  leaks more information to the context than fail-stopping would. We could study
  such interesting questions in the context of noninterference for the context,
  but first we would need to make that to work. Also, once we add exceptions
  will we be maybe able to only allow the program to catch contract failures,
  but not the context? Anyway, I agree we should think about something smart we
  can say about blame at the current moment ... future work? :)}
\fi

\newtext{
While our contributions are developed in the context of \fstar{}, we believe
that many ideas of the paper could be applied to solving the same general
problem also in other proof-oriented programming languages, especially those
based on dependent type theory.
Dijkstra monads have, in particular,
also been implemented in Coq~\cite{dm4all}, so in principle a direct port to Coq of
our whole \sciostar{} framework seems possible, just that the \ls{MIO} Dijkstra monad will not be
as usable in Coq without support for discharging the gathered verification
conditions with an SMT solver, which is a big advantage we get from using \fstar{}
instead. For Coq a more usable solution could probably be built more quickly by
targeting an interactive verification framework for IO programs, such as Ynot~\cite{DBLP:journals/jsc/MalechaMW11},
FreeSpec~\cite{LetanR20}, ITrees~\cite{DBLP:journals/pacmpl/XiaZHHMPZ20}, or Iris~\cite{DBLP:journals/corr/abs-1901-10541}.
While the precise details will vary
based on the choice of verification framework, some general ideas that we expect
to be portable to other frameworks are:
\begin{inlist}
\item the internal representation of \ls{MIO} as a free monad with an extra GetMState
  operation, which is directly compatible with how FreeSpec and ITrees represent
  computations, and which could provide a model to the axioms used by Ynot;
\item the formally verified combination of higher-order contracts and reference monitoring;
\item the secure compilation proof of RrHP from \autoref{sec:rrhp}, which should stay
  simple even in these frameworks, since all the main ingredients seem portable;
\item the soundness proof from \autoref{sec:soundness}, although this will likely become more manual.
\end{inlist}
}

\iffull
Another interesting line for future work would be to use parametricity to prove a
noninterference theorem formalizing that our flag-based effect polymorphic
context cannot directly call \ls{GetMState} or the IO operations.
There are, however, at least two challenges to overcome for achieving this:
(1)~our noninterference statement is significantly more complex than prior work
in this space~\cite{AlgehedB19,AlgehedBH21}; and
(2)~the \ls{erased} type, its interaction with the primitive \ls{Ghost} effect
of \fstar{}, and their parametricity properties would first need to be better
formally understood.
\fi



\appendix

\iffull
\section{An \ls{MIO} example}
\label{sec:mio-examples}

As an example illustrating the ingredients from \autoref{sec:mio}, we show 
a computation with a pre-condition and then manually convert the pre-condition
to a dynamic check.
We choose here \ls{get_req} that is used in the implementation of the web server
(\autoref{fig:running_example}).
This computation has as pre-condition that the file descriptor given
as argument is open. The computation reads from the client and if the read is successful,
then it tries to parse the buffer to check if it is a valid HTTP request.

\begin{lstlisting}
let get_req (client:file_descr) :
    MIO (option (r:buffer{valid_http_request r})) IOOps (requires (fun h -> is_open h client))
                                              (ensures (fun _ _ lt -> only_reads_from_client client lt)) =
  let buf = read Prog client in if Inl? buf && check_http_request (Inl?.v buf) then Some (Inl?.v buf) else None
\end{lstlisting}

To manually convert the pre-condition of \ls{get_req} to a dynamic
check,\ch{Unclear why would one want to do this / why is \ls{wrapped_get_req}
  useful.  Is this just a hypothetical thought experiment that has no longer to
  do with our running example? If so, let's say it
  ``Suppose we wanted to pass \ls{get_req} to the adversarial context'' ... Can't we just
  call export then? Won't that produce something similar? Why do we do it
  manually? If it's just  we should say that: ``this could also be done with
  export, with similar(?) results, but for illustration purposes here we do it manually''}
we wrap the function in a new function with no pre-condition inside of which we first
use \ls{get_mstate} to perform the check. Because \ls{is_open} is a boolean predicate,
we can use it as the condition of \ls{if}.
The new function is indexed by the flag \ls{AllOps} instead of
\ls{IOOps} because now we also use operation \ls{get_mstate}.

\begin{lstlisting}
let wrapped_get_req (client:file_descr) :
    MIO (option (r:buffer{valid_http_request r})) AllOps (requires (fun _ -> True))
                                              (ensures (fun _ _ lt -> only_reads_from_client client lt)) =
  if is_open (get_mstate ()) client then get_req client else None
\end{lstlisting}

No usage of tactics, lemma calls, or involved proof terms\ch{normal reader will have
  no clue what ``involved terms'' means, but added lemmas explicitly to the
  list; anything else to add?}\er{changed ``involved terms'' for ``involved proof terms''?}
is required to successfully type check the two snippets of code.
\fstar{} uses SMT automation to establish that the
post-condition holds assuming the pre-condition, and no intervention
from the user is needed in these cases, thanks to the way we have
structured the implementation of specifications.
\fi

\section*{Data Availability Statement}
This paper comes with an artifact in \fstar
that contains a formalization of the contributions above.
The artifact contains the \sciostar{} framework, the mechanized proofs of sound
enforcement of a global trace property and of RrHP, as well as a few examples.
The artifact is available on Zenodo \cite{artifact_zenodo} and on Github \cite{artifact_github}.

\ifanon\else
\begin{acks}
We thank the POPL 2024, ICFP 2023, PriSC 2023, and ICFP SRC
2020 referees for their helpful reviews.
This work was in part supported
by the \grantsponsor{1}{European Research Council}{https://erc.europa.eu/}
under \ifcamera\else ERC\fi{} Starting Grant SECOMP (\grantnum{1}{715753}),
by the German  Federal Ministry of Education and Research BMBF (grant 16KISK038, project 6GEM),
and by the Deutsche Forschungsgemeinschaft (DFG\ifcamera\else, German Research Foundation\fi)
as part of the Excellence Strategy of the German Federal and State Governments
-- EXC 2092 CASA - 390781972.
E.R. was supported by the Estonian Research Council starting grant PSG749.
\end{acks}
\fi

\ifcamera
\bibliographystyle{ACM-Reference-Format}
\else
\bibliographystyle{abbrvnaturl}
\clearpage
\fi
\bibliography{fstar}

\end{document}
\endinput